\newtheorem{theorem}{Theorem}
\newtheorem{corollary}{Corollary}
\newtheorem{lemma}{Lemma}
\definecolor{RED}{rgb}{0.6,0.,0.}
\definecolor{BLUE}{rgb}{0.,0.,0.6}
\definecolor{GREEN}{rgb}{0.,0.6,0.}
\definecolor{MALINA}{rgb}{0.6,0.,0.6}
\definecolor{YELLOW}{rgb}{0.8,0.8,0}
\newcommand{\squeezeup}{\vspace{-1.5 mm}}
\begin{document}
\bstctlcite{IEEEexample:BSTcontrol}
\title{Graphical Models in {Meshed} Distribution Grids:
Topology estimation, change detection \& limitations}
\author{
\IEEEauthorblockN{Deepjyoti~Deka\dag, Saurav~Talukdar\ddag, Michael~Chertkov\dag, and Murti~Salapaka \ddag}\\
\IEEEauthorblockA{(\dag) Los Alamos National Laboratory, New Mexico, USA,
(\ddag)University of Minnesota, Minneapolis, USA}
}

\maketitle
\begin{abstract}
Graphical models are a succinct way to represent the structure in probability distributions. This article analyzes the graphical model of nodal voltages in non-radial power distribution grids. Using algebraic and structural properties of graphical models, algorithms exactly determining topology and detecting line changes for distribution grids are presented along with their theoretical limitations. We show that if distribution grids have {cycles/loops of size greater than three}, then nodal voltages are sufficient for efficient topology estimation without additional assumptions on system parameters. In contrast, line failure or change detection using nodal voltages does not require any structural assumption. {Under noisy measurements, we provide the first non-trivial bounds on the maximum noise that the system can tolerate for asymptotically correct topology recovery. The performance of the designed algorithms is validated with non-linear AC power flow samples generated by Matpower on test grids, including scenarios with injection correlations and system noise.}
\end{abstract}
\begin{IEEEkeywords}
Concentration matrix, Conditional independence, Distribution grids, Graphical lasso, Graphical models, Power flows, line outage, Measurement noise.
\end{IEEEkeywords}

\section{Introduction}
\label{sec:intro}
Power distribution grids comprise part of the power network that include low and medium voltage lines that connect distribution substation to the end users/consumers.The presence of active controllable devices and stochastic resources on the distribution grid and their potential to provide grid services has made estimation problems in distribution grids of paramount importance in recent years. In this paper we discuss issues in topology estimation and change detection in distribution grids. This problem has applications in various areas such as fault detection and localization, and estimation of critical lines that affect locational marginal prices. {Structurally, a majority of distribution grids are designed for radial or tree-like operation \cite{hoffman2006practical}. However, to optimally utilize controllable resources and resiliently deliver power, distribution grids with meshed/loopy operation are being proposed \cite{NY,germany,taiwan}.} In either case, the true operational topology is determined by the current status of breakers/switches on an underlying set of permissible edges as shown in Fig.~\ref{fig:city}. {Such topology reconfiguration \cite{distributiongridbook} may be warranted due to repair work, ensure feasible voltage profiles, or to minimize line losses \cite{hassan2016topology}}. Topology estimation and change detection thus refer to identifying the status and change of status respectively of permissible lines in the grid. In this paper, we synthesize statistical methods for these goals, for the more general case of meshed/loopy distribution grids under noisy measurements. {All algorithms developed in the article extend trivially to the restricted case of radial networks.}

It is worth mentioning that the estimation problems in distribution grids are hampered by the sparsity of real-time meters, including on grid lines. New placement of line/breaker status monitors is further complicated when presence of underground lines are employed. As such we focus on using measurements of nodal voltages for our estimation goals. Such nodal measurements have become more accessible in recent years with usage of installed nodal high fidelity meters such as phasor measurement units (PMUs) \cite{PMU}, micro-PMUs \cite{micropmu}, FNETs \cite{FNET}, and sensors on smart controllable devices. {Our proposed algorithms enable distribution management systems (DMS) to use such measurements to estimate grid connectivity, and validate the correctness of any topology reconfiguration action \cite{distributiongridbook}. Additionally labelled data of known topology reconfiguration available from feeder remote terminal units (FRTU) \cite{distributiongridbook} can be used to estimate statistics of bus voltages and its changes. This can subsequently aid in the selection of thresholds in the related algorithms.}

\subsection{Prior Work}
Estimation problems, including topology estimation and change detection, in power distribution grids has attracted significant attention in recent years. Researchers have looked at multiple approaches, both active and passive, in learning using varying measurement type and availability. Example of such schemes include greedy methods \cite{dekatcns,dekapscc}, voltage signature based methods \cite{arya, berkeley}, probing schemes \cite{cavraro2018graph}, imposing graph cycle constraints \cite{ramstanford} and iterative schemes for addressing missing data \cite{dekasmartgridcomm,sejunpscc}. In contrast to the referred work that employ static voltage samples, learning schemes that exploit correlated voltage measurements from a linear dynamical framework are reported in \cite{sauravacc,sauravacm}.

In work related closest to this article, authors have addressed topology identification using properties of probabilistic graphical model of nodal voltages. \cite{bolognani2013identification} uses signs in inverse covariance matrix of voltage magnitudes for topology identification, but limited to radial topologies in grids with constant $r/x$ (resistance to reactance) line ratio. \cite{dekapscc,dekathreephase} uses voltage conditional independence tests for guaranteed topology identification, but limited to radial distribution grids in single and three-phase networks. \cite{he2011dependency} discusses topology change detection using an approximate graphical model (Markov random field). Under a similar approximate graphical model, topology reconstruction algorithms are proposed for radial and loopy distribution grids in \cite{weng2017distributed} and \cite{liao2016urban} respectively, under independent nodal current injections.

\subsection{Contribution}
In this work, we consider graphical model \cite{wainwright2008graphical} based learning schemes for topology estimation and line failure detection using complex voltage measurements for \emph{meshed/loopy} grids. We show that, under uncorrelated nodal injections, the structure of the voltage graphical model includes additional edges over the topology of the underlying grid graph. Approximate schemes for topology learning have been discussed previously \cite{he2011dependency,weng2017distributed,liao2016urban} by ignoring spurious edges. In contrast, we take a principled theoretical approach and develop two algorithms, and present conditions under which exact topology recovery is possible without ignoring the spurious edges. The first algorithm relies on local neighborhood counting within the estimated graphical model, while the second algorithm uses algebraic sums of terms in the inverse covariance matrix. We show that the first approach is able to estimate the true structure for grids with loops/cycles of size greater than six, while the second approach only requires minimum loop/cycle length to be greater than three (no triangles). It is worth noting that our algorithms do not require knowledge of line impedances, and nodal injection statistics. Next, we develop a topology change detection algorithm that is able to estimate line failures or additions using entries in the graphical model. {Our learning algorithms are proven to be asymptotically correct for noise-less voltage measurements generated with uncorrelated nodal injections. Moreover, we provide non-trivial bounds on the maximum noise in voltage measurements, and maximum injection cross-correlation under which the algorithms are guaranteed to perform accurately. The performance of the developed algorithms, under both noiseless and noisy settings, are calibrated with finite samples from non-linear power flow, using Matpower \cite{matpower}.}

{This article is the journal version of a conference article \cite{dekairep} on learning in bulk grids using DC power flow model. Over \cite{dekairep}, we extend the learning algorithms to linearized AC voltages and analyze the algorithms' computational complexity with detailed proofs not present in the conference version. The algorithm design for topology change detection, and theoretical analysis under noisy measurements and correlated injections are novel contributions of this work. Furthermore, this journal version includes simulations with noisy non-linear AC PF samples for all algorithms (as against DC samples in \cite{dekairep}), that corroborate the novel theoretical development.} The rest of the paper is organized as follows. The next section presents nomenclature and power flow relations. Section \ref{sec:graphicalmodel} discusses properties of graphical model of voltage measurements. Section \ref{sec:neigh} includes the first topology learning algorithm. Section \ref{sec:thres} describes the second learning algorithm. Topology change detection is discussed in Section \ref{sec:change}. Section \ref{sec:noise1} present guarantees on the performance of the learning and change detection algorithms in the presence of measurement noise and injection correlation. Section \ref{sec:simulations} includes simulations results of our work on test cases. Conclusions are included in Section \ref{sec:conclusion}.

\section{Power Grid and Power Flow}
\label{sec:structure}
We consider a power grid and represent its structure by the graph, ${\cal G}=({\cal V},{\cal E})$, where ${\cal V}$ is the set of $N+1$ buses/nodes and ${\cal E}$ is the set of operational undirected lines/edge (see Fig.~\ref{fig:city}). Normally the operational topology is determined by closing switches/breakers within a set of permissible lines, ${\cal E}^{full}$. Such changes may be made hourly or daily depending on the load configuration served and other control needs, hence the need for grid topology estimation and change detection. We denote an edge between two nodes $i$ and $j$ by $(ij)$. Let ${\cal P}^{j}_{i} \equiv \{(ik_1), (k_1k_2),...(k_{n-1}j)\}$ be a set of $n$ distinct undirected edges that connect node $i$ and node $j$. We call ${\cal P}^{j}_{i}$ as a path of length $n$ from $i$ to $j$. If $i = j$ and path ${\cal P}^{j}_{i}$ has length greater than $2$, we term it a `cycle'. The number of edges in the cycle is termed its `cycle length. {The size of the smallest cycle in the graph is called its `minimum cycle length'.} By definition, the minimum cycle length in a radial graph is considered to be infinite as it has no cycles. Note that for general loopy grids, there may be multiple paths between two nodes. The neighbors of a node are the set of nodes it shares an edge with. Nodes with the length of the shortest path connecting them equal to two are termed `two-hop neighbors'. Nodes with degree $1$ are termed `leaves' and their individual neighbors are termed `parents'.
\begin{figure}[!bt]
\centering
\includegraphics[width=0.20\textwidth]{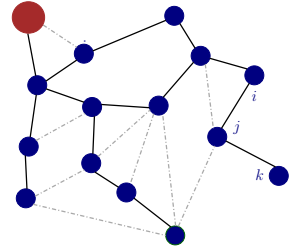}
\caption{Distribution grid with a substation (reference node). Operational edges are denoted by solid lines, while open switches are denoted by dotted lines. ${\cal P}_k^i \equiv \{(kj), (ji)\}$ is one path from $k$ to $i$.
\label{fig:city}}
\end{figure}
For power grid $\cal G$, we use the following power flow model. \\
\textbf{Linear Coupled Power Flow (LC-PF) model:} Let $z_{ij}=r_{ij}+\hat{i} x_{ij}$ denote the complex impedances of line $(ij)$ in the grid ($\hat{i}^2=-1$), where resistance and reactance are denoted by $r_{ij}$ and $x_{ij}$ respectively. The non-linear AC power flow equation for the injection at node $i$ is given by:
\begin{align}
 P_i =p_i+\hat{i} q_i&=\underset{j:(ij)\in{\cal E}}{\sum} V_i(V_i^* - V_j^*)/z_{ij}^*\label{P-complex},\\
&= \underset{j:(ij)\in{\cal E}}{\sum}\frac{v_i^2-v_i v_j\exp(\hat{i}\theta_i-\hat{i}\theta_j)}{z_{ij}^*},\label{P-complex1}
\end{align}
where, the real valued scalars, $v_i$, $\theta_i$, $p_i$ and $q_i$ denote the voltage magnitude, voltage phase, active and reactive power injections respectively. The complex valued voltage and injection are given by $V_i (= v_i\exp(\hat{i}\theta_i))$ and $P_i$ respectively. During normal operation, lossless models for power flow are employed \cite{dekatcns,89BWc,bolognani2016existence} which can be obtained by linearizing the power flow equations in (\ref{P-complex}), assuming small deviations in both phase difference of neighboring nodes ($|\theta_i - \theta_j| <<1$ for edge $(ij)$), and voltage magnitude deviations from the reference bus ($|v_i -1|<<1$ for node $i$).
\begin{align}
 P_i =p_i+\hat{i} q_i&= \underset{j:(ij)\in{\cal E}}{\sum}\frac{(v_i-v_j) -\hat{j}(\theta_i-\theta_j)}{z_{ij}^*},\label{intermediate}
\end{align}
Further one bus (normally the substation bus) is taken as reference and voltages at other buses are measured relative to it. Under the lossless model, the nodal injection at the reference bus is the negative sum of injections at all other nodes. Without a loss of generality, the reference bus is ignored and the power flow equations are restricted to the $N$ non-reference buses in the grid. Using $v,\theta, p, q$ to represent the vector of nodal voltage magnitude, phase, active and reactive injections respectively, we get the following Linear Coupled Power Flow (LC-PF) model, in matrix form:
\begin{align}
&\begin{bmatrix}v \\\theta\end{bmatrix}= H_{(g,\beta)}^{-1}\begin{bmatrix}p \\q\end{bmatrix}, \text{where~} H_{(g,\beta)}=\begin{bmatrix}H_g &H_{\beta}\\ H_{\beta} &-H_{g}\end{bmatrix} \label{PF_LPV_p}
\end{align}
Here $H_{\beta}$ ($H_{g}$) is the reduced weighted Laplacian matrix for the grid ${\cal G}$ with edge weights given by susceptances $\beta$ (conductances $g$). Here $g_{ij}+\hat{i}\beta_{ij}$ equals $\frac{1}{r_{ij}-\hat{i}x_{ij}}$. The sparsity of $H_\beta$ and $H_g$ encodes the grid structure as noted below:
\begin{align}
 H_g(i,j) = \begin{cases}\sum_{k:(ik) \in {\cal E}}g_{ik} ~~\text{if~} i=j\\
-g_{ij} ~~\text{if~} (ij) \in {\cal E}\\
 0 ~~\text{otherwise}\end{cases}.\label{Laplacian}
\end{align}
The reduction is derived by removing the row and column corresponding to the reference bus from the weighted Laplacian matrix. We use $v, \theta, p, q$ for the $N$ dimensional reduced vectors to denote the voltage magnitude, phase, active and reactive injections at the non-substation nodes respectively, from this point onward. Note that (\ref{PF_LPV_p}) is a generalization of the LinDistFlow model \cite{89BWc,89BWb,89BWa} used in radial networks to loopy grids. Further ignoring voltage magnitude deviations converts the model to the standard DC power flow model \cite{abur2004power} used in transmission grids. The accuracy of the LC-PF voltages with non-linear power flow samples are discussed in \cite{dekathreephase,bolognani2016existence}. In the next section, we discuss the properties of the distribution of voltage fluctuations under the invertible LC-PF model.

\section{Probabilistic Graphical Model of Complex Voltages}
\label{sec:graphicalmodel}
For a $N$ bus (ignoring the reference bus) system, the total number of scalar voltage variables is $2\times N$, considering both magnitude and phase. We use graphical models to represent the structure within the distribution of voltages under ambient fluctuations of nodal injections. The following assumption states the model of nodal injections at the non-reference nodes.

\textbf{Assumption $1$}: Fluctuations at non-reference nodal injections in the grid are uncorrelated zero-mean Gaussian random variables with non-zero covariances. Thus, $(p_i,q_i)$ is uncorrelated from $(p_j,q_j)$ if $i\not =j$, while each is a two-dimensional zero-mean Gaussian.

This assumption, similar to prior work in literature {\cite{dekatcns,he2011dependency,liao2016urban,cavraro2017voltage} arises from the fact that fluctuations in loads in the ambient regime are typically uncorrelated. If small linear trends in the normal injections and consequently voltages exist, such trends can be empirically removed by taking the difference of consecutive measurements to obtain uncorrelated fluctuations \cite{sauravacc,liao2016urban}.} Note that active and reactive injections at the same node may be correlated, under Assumption $1$. {In later sections, we present results on the effect of inter-nodal injection correlation on the performance of the developed learning algorithms.}

{Our algorithms for topology identification and change detection need only the inverse covariance matrix of voltage fluctuations, and are agnostic to the underlying parametric distributions. The Gaussianity assumption of injection fluctuations and thereby of voltages is used only for efficient estimation of the inverse covariance under low number of measurement samples. Existing work \cite{liao2016urban,cavraro2017voltage,yury,bienstock2014chance} show that real voltage and injection data indeed follow Gaussian distributions.}

Here, the injection vector \text{\footnotesize$\begin{bmatrix}p\\q\end{bmatrix}$} is modelled by the Gaussian random variable ${\cal P}^{LC}(p, q)\equiv \mathcal{N}(0, \Sigma_{(p,q)})$ where the covariance matrix of injections is given by
\begin{align}
\squeezeup
\Sigma_{(p,q)}= \begin{bmatrix}\Sigma_{pp} &\Sigma_{pq} \\ \Sigma_{qp} &\Sigma_{qq}\end{bmatrix} = \mathbb{E}\left[\begin{bmatrix}p\\q\end{bmatrix}[p^T~ q^T]\right] .\label{sigma_lc}
\end{align}
Under Assumption $1$, each block in $\Sigma_{(p,q)}$ is a diagonal matrix. Given the voltages \text{\footnotesize$\begin{bmatrix}v\\\theta\end{bmatrix}$} are related by a linear model to injections, their distribution is also a zero-mean Gaussian $\mathcal{N}(0, \Sigma_{(v,\theta)})$ \cite{gubner2006probability}. Using (\ref{PF_LPV_p}), the covariance matrix of voltages is given by
\begin{align}
\Sigma_{(v,\theta)}= \begin{bmatrix}\Sigma_{vv} &\Sigma_{v\theta} \\ \Sigma_{\theta v} &\Sigma_{\theta\theta}\end{bmatrix} = H^{-1}_{(g,\beta)}\Sigma_{(p,q)}H^{-1}_{(g,\beta)}. \label{covar_varepsilon}
\end{align}
The following result describes the analytic form of ${\Sigma^{-1}_{(v,\theta)}}$ which will later be used for structure estimation.
\begin{lemma}\label{inverse_LC_PF}
For LC-PF, the inverse covariance matrix ${\Sigma^{-1}_{(v,\theta)}}$ of nodal voltages satisfies
\begin{align*}
&{\Sigma^{-1}_{(v,\theta)}} = \begin{bmatrix} J_{vv} &J_{v\theta}\\J_{\theta v} &J_{\theta\theta}\end{bmatrix}\text{~where~}\\
&J_{vv}=\text{\small $H_{g}D^{-1}(\Sigma_{qq}H_{g}-\Sigma_{pq}H_{\beta})-H_{\beta}D^{-1}(\Sigma_{pq} H_{g}-\Sigma_{pp}H_{\beta})$},\nonumber\\
&J_{v\theta}=\text{\small $H_{g}D^{-1}(\Sigma_{qq}H_{\beta} +\Sigma_{pq}H_{g})-H_{\beta}D^{-1}(\Sigma_{pq}H_{\beta}+\Sigma_{pp}H_{g})$},\nonumber\\
&J_{\theta v}=\text{\small $H_{\beta}D^{-1}(\Sigma_{qq}H_{g} -\Sigma_{pq}H_{\beta})+H_{g}D^{-1}(\Sigma_{pq} H_{g}-\Sigma_{pp}H_{\beta})$},\nonumber\\
&J_{\theta\theta}=\text{\small $H_{\beta}D^{-1}(\Sigma_{qq}H_{\beta} +\Sigma_{pq}H_{g})+H_{g}D^{-1}(\Sigma_{pq}H_{\beta}+\Sigma_{pp}H_{g})$},\nonumber\\
&D(i,i) = |\Sigma_{pp}(i,i)\Sigma_{qq}(i,i)-{\Sigma_{pq}}^2(i,i)| \text{~for diagonal~}D.
\end{align*}
\end{lemma}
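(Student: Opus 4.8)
The plan is a direct computation that exploits the symmetry of $H_{(g,\beta)}$ together with the block-diagonal structure of $\Sigma_{(p,q)}$ implied by Assumption~$1$. First I would note that $H_g$ and $H_\beta$ are (reduced) weighted Laplacians, hence symmetric, and consequently the $2\times 2$ block arrangement $H_{(g,\beta)}=\left[\begin{smallmatrix}H_g & H_\beta\\ H_\beta & -H_g\end{smallmatrix}\right]$ is itself symmetric. Since the LC-PF model is invertible, inverting both sides of $(\ref{covar_varepsilon})$ gives $\Sigma^{-1}_{(v,\theta)} = H_{(g,\beta)}\,\Sigma^{-1}_{(p,q)}\,H_{(g,\beta)}$. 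Thus it suffices to evaluate $\Sigma^{-1}_{(p,q)}$ and then perform a single block matrix product.

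Second, under Assumption~$1$ each of the blocks $\Sigma_{pp},\Sigma_{pq},\Sigma_{qq}$ is diagonal, so after a reordering of coordinates $\Sigma_{(p,q)}$ is block diagonal, a direct sum of the $2\times 2$ covariance matrices $\left[\begin{smallmatrix}\Sigma_{pp}(i,i) & \Sigma_{pq}(i,i)\\ \Sigma_{pq}(i,i) & \Sigma_{qq}(i,i)\end{smallmatrix}\right]$ over nodes $i$, each positive definite because the bivariate Gaussian $(p_i,q_i)$ is nondegenerate. Inverting each $2\times 2$ block by the cofactor formula and reassembling yields $\Sigma^{-1}_{(p,q)}=\left[\begin{smallmatrix} D^{-1}\Sigma_{qq} & -D^{-1}\Sigma_{pq}\\ -D^{-1}\Sigma_{pq} & D^{-1}\Sigma_{pp}\end{smallmatrix}\right]$, where $D$ is the diagonal matrix with $D(i,i)=\Sigma_{pp}(i,i)\Sigma_{qq}(i,i)-\Sigma_{pq}^{2}(i,i)$; positive definiteness makes this quantity positive, so the absolute value in the statement is merely cosmetic. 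Note that $D^{-1},\Sigma_{pp},\Sigma_{pq},\Sigma_{qq}$ are all diagonal and hence mutually commuting, which is what permits the factor $D^{-1}$ to be collected on the left.

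Third, I would expand $H_{(g,\beta)}\,\Sigma^{-1}_{(p,q)}\,H_{(g,\beta)}$ block by block: multiplying the first two factors produces a $2\times 2$ block matrix, and multiplying that on the right by $H_{(g,\beta)}$ and grouping the resulting diagonal $D^{-1}$ and $\Sigma$ terms inside parentheses reproduces exactly the claimed formulas for $J_{vv},J_{v\theta},J_{\theta v},J_{\theta\theta}$. As a consistency check one can verify $J_{v\theta}=J_{\theta v}^{T}$ using symmetry of $H_g,H_\beta$ and of the diagonal blocks, matching the symmetry of $\Sigma^{-1}_{(v,\theta)}$. The only point demanding care, and the closest thing to an obstacle, is that $H_g$ and $H_\beta$ do \emph{not} commute with the diagonal weight matrices, so factors such as $H_g D^{-1}\Sigma_{qq}H_g$ must be kept in that precise order; this is precisely why the stated expressions are written with the Laplacians kept apart, e.g. $H_gD^{-1}(\Sigma_{qq}H_g-\Sigma_{pq}H_\beta)$, rather than simplified further. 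Beyond this ordering bookkeeping and the sign pattern arising from the $-H_g$ block, the argument is routine.
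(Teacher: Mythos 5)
Your proposal is correct and follows essentially the same route as the paper: invert the per-node $2\times 2$ injection covariance blocks (diagonal $\Sigma_{pp},\Sigma_{pq},\Sigma_{qq}$ with determinant $D$) to obtain $\Sigma^{-1}_{(p,q)}$, then invert Eq.~(\ref{covar_varepsilon}) to get $\Sigma^{-1}_{(v,\theta)}=H_{(g,\beta)}\Sigma^{-1}_{(p,q)}H_{(g,\beta)}$ and expand blockwise. Your added remarks on symmetry of $H_{(g,\beta)}$, positivity of $D(i,i)$, and operator ordering merely make explicit what the paper leaves implicit.
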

\begin{proof}
In (\ref{covar_varepsilon}), each block in $\Sigma_{(p,q)}$ is a diagonal matrix with $\Sigma_{pq} = \Sigma_{qp}$. Thus the following holds
\begin{align}
\begin{bmatrix}\Sigma_{pp} &\Sigma_{pq} \\ \Sigma_{pq} &\Sigma_{qq}\end{bmatrix}^{-1} = \begin{bmatrix}D^{-1}\Sigma_{qq} &-D^{-1}\Sigma_{pq}\\ -D^{-1}\Sigma_{pq} &D^{-1}\Sigma_{pp}\end{bmatrix},\nonumber
\end{align}
where, $D$ is a diagonal matrix with {the $i^{th}$ diagonal entry} $D(i,i)$ given by the determinant of $\begin{bmatrix}\Sigma_{pp}(i,i) &\Sigma_{pq}(i,i) \\\Sigma_{pq}(i,i) &\Sigma_{qq}(i,i)\end{bmatrix}$. Inverting $\Sigma_{(v,\theta)}$ using (\ref{covar_varepsilon}) then proves the result.
\end{proof}
Next we describe the graphical model of the distribution of nodal voltages.\\
\textbf{Graphical Model}: By definition, the probability distribution of a $n$ dimensional random vector $X = [X_1, X_2,..X_n]^T$ corresponds to an undirected graphical model ${\cal GM}$ \cite{wainwright2008graphical} with vertex set ${\cal V_{GM}}$ representing variables and edges representing conditional dependence. For node $i$ in $\mathcal{GM}$, its neighbors form the smallest set of nodes $N(i) \subset {\cal V_{GM}}- \{i\}$ such that for any node $j \not\in N(i)$, $i$ is conditionally independent of $j$ given the set $N(i)$, i.e., ${\cal P}(X_i|X_ {N(i)},X_j) = {\cal P}(X_i|X_ {N(i)})$.

For a Gaussian graphical model, it is known that the edges in the graphical model correspond to non-zero terms in the inverse covariance matrix (also called `concentration' matrix) \cite{wainwright2008graphical}. In our case, we determine the structure of the graphical model $\mathcal{GM}$ of voltages using properties of $\Sigma_{(v,\theta)}^{-1}$. Note that there are twice as many nodes in the graphical model $\mathcal{GM}$ as there are buses in the grid as $\mathcal{GM}$ includes separate nodes for bus voltage magnitudes and phases.
\begin{theorem}\label{structure_LC_PF}
The graphical model $\mathcal{GM}$ for nodal voltage magnitudes and phase angles in grid $\cal G$ includes edges between voltage magnitudes and phase angles only at the same bus, neighboring buses, and two-hop neighboring buses.
\end{theorem}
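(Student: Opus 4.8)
The plan is to read the sparsity pattern of $\Sigma^{-1}_{(v,\theta)}$ directly off the closed form in Lemma~\ref{inverse_LC_PF}, and then invoke the standard Gaussian graphical model characterization (zeros of the concentration matrix $\leftrightarrow$ absent edges) already recalled in the excerpt. The key structural observation is that every one of the four blocks $J_{vv},J_{v\theta},J_{\theta v},J_{\theta\theta}$ is a finite sum of matrix products of the form $H_a\,\Lambda\,H_b$, where $H_a,H_b\in\{H_g,H_\beta\}$ and $\Lambda$ is diagonal (a product of $D^{-1}$ with one of $\Sigma_{pp},\Sigma_{qq},\Sigma_{pq}$, all diagonal under Assumption~$1$). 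Note $D(i,i)$ is the determinant of the positive (semi-)definite $2\times2$ injection covariance at bus $i$ and is nonzero, so $D^{-1}$ is well defined and diagonal.

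First I would record the sparsity of the building blocks. By (\ref{Laplacian}) the reduced weighted Laplacians $H_g$ and $H_\beta$ inherit the sparsity of ${\cal G}$: their $(i,j)$ entry can be nonzero only when $i=j$ or $(ij)\in{\cal E}$, i.e.\ only when $j$ lies in the closed neighborhood of $i$ (the node $i$ together with its grid-neighbors). Left- or right-multiplication by a diagonal matrix merely rescales rows or columns and does not enlarge this support, so each summand has the same support as a bare product $H_a H_b$.

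Next I would bound the support of such a product. The $(i,j)$ entry of $H_a\Lambda H_b$ equals $\sum_k H_a(i,k)\,\Lambda(k,k)\,H_b(k,j)$, which is nonzero only if some index $k$ satisfies both $H_a(i,k)\neq0$ and $H_b(k,j)\neq0$; by the previous step this forces $k$ to lie in the closed neighborhood of $i$ and simultaneously in the closed neighborhood of $j$. Hence the entry can be nonzero only when the closed neighborhoods of buses $i$ and $j$ intersect, i.e.\ when $i=j$, $(ij)\in{\cal E}$, or $i$ and $j$ are two-hop neighbors. Since a sum of matrices supported on a set $S$ is itself supported on $S$, every block $J_{vv},J_{v\theta},J_{\theta v},J_{\theta\theta}$ has exactly this support.

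Finally I would translate this into a statement about $\mathcal{GM}$: an edge of $\mathcal{GM}$ between $v_i$ and $v_j$ (respectively $v_i$ and $\theta_j$, or $\theta_i$ and $\theta_j$) can exist only if the corresponding entry of $J_{vv}$ (respectively $J_{v\theta}$, $J_{\theta\theta}$) is nonzero, which by the bound above occurs only when buses $i$ and $j$ coincide, are grid-neighbors, or are two-hop neighbors -- precisely the claim. I do not expect a real obstacle here: the theorem asserts only an inclusion (which buses \emph{can} be linked), so possible cancellations among the summands of Lemma~\ref{inverse_LC_PF} are harmless; the only points needing a word of care are the invertibility of $D$ and the bookkeeping that the off-diagonal blocks $J_{v\theta}$, $J_{\theta v}$ relate magnitude nodes to phase nodes across the same three classes of bus pairs.
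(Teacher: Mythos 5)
Your proposal is correct and follows essentially the same route as the paper: both read the support of each block $J_{vv},J_{v\theta},J_{\theta v},J_{\theta\theta}$ from Lemma~\ref{inverse_LC_PF} as sums of products of the Laplacian-sparse matrices $H_g,H_\beta$ with diagonal matrices, conclude the entries vanish beyond two hops, and invoke the Gaussian concentration-matrix characterization of $\mathcal{GM}$. The only difference is cosmetic: the paper additionally remarks on the converse (entries at true edges and two-hop pairs are nonzero outside a measure-zero set), which you correctly note is not needed for the stated inclusion.
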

\begin{proof}
Edges in $\mathcal{GM}$ correspond to non-zero terms in ${\Sigma^{-1}_{(v,\theta)}}$ with analytic form given in Lemma \ref{inverse_LC_PF}. We prove the statement first by showing that for voltage magnitude and/or phases at buses $i$ and $j$ three or more hops away, the corresponding entries in ${\Sigma^{-1}_{(v,\theta)}}$ are zero for each block $J_{vv}$, $J_{v\theta}$, $J_{\theta v}$ and $J_{\theta\theta}$. First consider the four terms in the expression for $J_{v\theta}$ in Lemma \ref{inverse_LC_PF}. While $D^{-1}, \Sigma_{qq}, \Sigma_{pq}, \Sigma_{pp}$ are all diagonal matrices, matrices $H_g, H_\beta$ have non-zero values for diagonal terms and for neighboring nodes (see (\ref{Laplacian})). From direct multiplication it is clear that $J_{vv} (i,j) = 0$ if $i,j$ are not neighbors and do not have common neighbor. Similarly it follows for $J_{v\theta}$, $J_{\theta v}$ and $J_{\theta\theta}$, the statement follows.

Note that if $(ij) \in \mathcal{E}$ or node $k$ exists such that $(ik), (kj) \in \mathcal{E}$, then the corresponding entry ${\Sigma^{-1}_{(v,\theta)}}$ is non-zero unless for pathological cases that form a set of measure zero. Thus the statement holds.
\end{proof}
Remark: In the remaining part of the manuscript, we will ignore such pathological cases that induce degeneracy; thus if there is a true link in $\mathcal{G}$ between nodes $i$ and $j$ then terms in $J_{vv},\ J_{v\theta},\ J_{\theta,\theta}$, and $J_{\theta v}$ do not conspire to be zero.
\begin{figure}[tb]
\centering\hfill
\subfigure[]{\includegraphics[width=0.25\columnwidth]{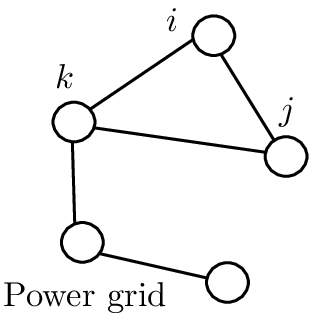}\label{fig:grid}}\hfill
\subfigure[]{\includegraphics[width=0.36\columnwidth]{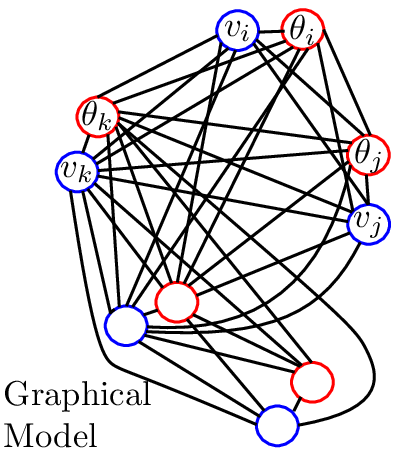}\label{fig:graphical_LC}}\hfill
\subfigure[]{\includegraphics[width=0.27\columnwidth]{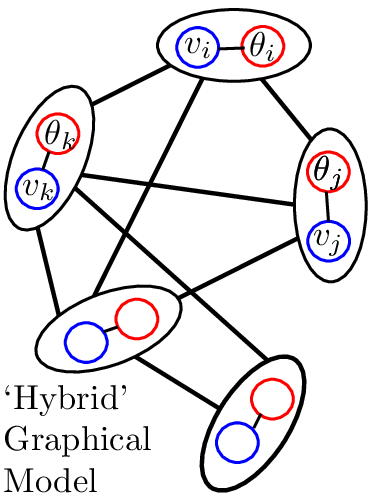}\label{fig:graphical_hybrid}}\hfill
\caption{(a) Loopy Power grid graph ${\mathcal{G}}$ (b) Graphical model $\mathcal{GM}$ for complex voltages (c) `Hybrid' graph $\mathcal{GM}_{hd}$ by combining voltage magnitude and phase nodes for same bus in graphical model. \label{fig:graphical_model}}
\end{figure}
Fig.~\ref{fig:graphical_model} depicts the structure of a loopy grid $\mathcal{G}$ and its associated graphical model $\mathcal{GM}$ of complex voltages. Note that if nodes pertaining to voltage magnitude and phase angles are combined into a `hybrid' node, we get a similar sized network $\mathcal{GM}_{hd}$ as the original grid but with additional edges between two hops neighbors as shown in Fig.~\ref{fig:graphical_hybrid}. To complete this section, we describe the estimation of $\Sigma^{-1}_{(v,\theta)}$ from $i.i.d$ complex voltage samples.

\subsection{Estimation of Voltage Graphical Model}
\label{sec:graphicalmodellearning}
Consider i.i.d. samples of the measured vector $y^k = \begin{bmatrix}v^k\\\theta^k\end{bmatrix}$ for $1\leq k\leq n$ in the grid. If $n$ is large, we use direct inverse of the sample covariance matrix. {Note that such inversion can be performed for general injection distributions as well.} However for scenarios with restricted number of samples, we use the maximum likelihood estimator of a Gaussian graphical model \cite{wainwright2008graphical} with a sparsity constraint to get the estimate ${\hat{\Sigma}^{-1}_{(v,\theta)}}$ using
\begin{align}
{\hat{\Sigma}^{-1}_{(v,\theta)}} = \arg \min_{S} \{-\log\det S +\langle S,\hat{\Sigma}_{(v,\theta)}\rangle +\lambda\|S\|_1\} \nonumber\\
\text{where~}\hat{\Sigma}_{(v,\theta)} = \frac{1}{n-1}\sum^n_{k=1} (y^k - \sum^n_{k=1}\frac{y^k}{n})(y^k - \sum^n_{k=1} \frac{y^k}{n})^T.\nonumber
\end{align}
This optimization problem is termed \textbf{Graphical Lasso} \cite{yuan2007model,tibshirani} in literature. The computational complexity for it scales as $O(N^3)$ where $N$ is the number of variables in the system \cite{tibshirani}.

Using Theorem \ref{structure_LC_PF} we present the first topology learning algorithm in the next section using separation rules.

\section{Topology Estimation based on Local Neighborhoods}
\label{sec:neigh}
The following result presents separation rules in $\mathcal{GM}_{hd}$ that enable identification of true edges between non-leaf nodes in the grid graph $\mathcal{G}$.
\begin{theorem}\label{condind_loopy}
Let the graphical model $\mathcal{GM}$ for voltages in power grid $\mathcal{G}$ be estimated and the `hybrid' graph $\mathcal{GM}_{hd}$ constructed by merging voltages and phase angles at the same bus. Let minimum cycle length in $\mathcal{G}$ be greater than $6$. Consider edge (ij) in $\mathcal{G}\mathcal{M}_{hd}$. Then, $\mathcal{GM}_{hd}$ has nodes $k,l$ with paths $k-i-l$ and $k-j-l$ but $(kl)$ not present \textbf{iff} $(ij)$ is a true edge in $\cal G$ between non-leaf nodes $i$ and $j$.
\end{theorem}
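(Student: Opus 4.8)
The plan is to establish both directions of the iff by translating the combinatorial condition on $\mathcal{GM}_{hd}$ into statements about the grid graph $\mathcal{G}$, using Theorem~\ref{structure_LC_PF} as the bridge: in $\mathcal{GM}_{hd}$, two hybrid nodes $i$ and $j$ are adjacent exactly when $i$ and $j$ are either neighbors or two-hop neighbors in $\mathcal{G}$ (ignoring the measure-zero degeneracies flagged in the Remark). So the ``common neighbors $k,l$ with $k-i-l$, $k-j-l$ but no edge $(kl)$'' pattern is a statement about how the (up-to-two-hop) neighborhoods of $i$ and $j$ overlap, and I would convert it entirely into grid-distance bookkeeping. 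The girth hypothesis (minimum cycle length $> 6$) is what makes these neighborhoods behave locally like trees out to the relevant radius, so distances are unambiguous and short cycles cannot create spurious coincidences.

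First, for the ``if'' direction, assume $(ij)$ is a true edge of $\mathcal{G}$ with $i,j$ non-leaves. Since $i$ is a non-leaf it has a neighbor $k \neq j$ in $\mathcal{G}$; similarly $j$ has a neighbor $l \neq i$. Then in $\mathcal{GM}_{hd}$: $k-i$ is an edge (they are grid-neighbors), $i-l$ is an edge because $d_{\mathcal G}(i,l)\le 2$ (path $i$--$j$--$l$), $k-j$ is an edge because $d_{\mathcal G}(k,j)\le 2$ (path $k$--$i$--$j$), and $j-l$ is an edge. So the paths $k-i-l$ and $k-j-l$ exist in $\mathcal{GM}_{hd}$. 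It remains to show $(kl)$ is \emph{not} an edge of $\mathcal{GM}_{hd}$, i.e.\ $d_{\mathcal G}(k,l) \ge 3$. Here one must choose $k,l$ carefully (e.g.\ pick $k,l$ to avoid the trivial overlaps) and invoke girth: if $d_{\mathcal G}(k,l)\le 2$ then combining a $k$--$l$ path of length $\le 2$ with $k$--$i$--$j$--$l$ of length $3$ yields a closed walk of length $\le 5$, which under girth $>6$ forces the walk to backtrack, and a short case analysis of the backtracking possibilities shows it cannot be closed unless $k$ or $l$ coincides with $i$ or $j$ or with each other --- contradicting the choices. (If $i$ or $j$ has only the single ``other'' neighbor and that choice is forced into a coincidence, the non-leaf assumption plus girth still leaves enough room; this is the fiddly bit.)

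Second, for the ``only if'' direction, suppose such $k,l$ exist in $\mathcal{GM}_{hd}$ witnessing the pattern for the edge $(ij)$; I must deduce that $(ij)$ is a genuine grid edge and that $i,j$ are non-leaves. The non-leaf part is immediate: $k$ (resp.\ $l$) is adjacent in $\mathcal{GM}_{hd}$ to $i$, hence $d_{\mathcal G}(i,k)\le 2$, so $i$ has a grid-neighbor; but more to the point, the existence of two distinct such witnesses forces $\deg_{\mathcal G}(i)\ge 2$, and similarly for $j$. The core claim is that $d_{\mathcal G}(i,j)=1$. We know $d_{\mathcal G}(i,j)\in\{1,2\}$ since $(ij)\in\mathcal{GM}_{hd}$. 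Suppose for contradiction $d_{\mathcal G}(i,j)=2$, say via a common grid-neighbor $m$. Then $i,k$ are within $2$ hops, $k,j$ within $2$ hops, $i,l$ within $2$ hops, $l,j$ within $2$ hops, and I want to derive that $d_{\mathcal G}(k,l)\le 2$ as well (contradicting the hypothesis that $(kl)\notin\mathcal{GM}_{hd}$). Assembling the four length-$\le 2$ paths with the length-$2$ path $i$--$m$--$j$ produces a closed walk whose total length is at most $2+2+2+2$ if routed as $k$--$\cdots$--$i$--$\cdots$--$l$--$\cdots$--$j$--$\cdots$--$k$; under girth $>6$ such a short closed walk must collapse (backtrack extensively), and the collapse is exactly what pins $k$ and $l$ to within distance $2$ of each other. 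Making this rigorous means enumerating the few ways the $\le 8$-edge closed walk can fail to be a cycle of length $\le 6$ and reading off the forced identifications.

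The main obstacle, in both directions, is the girth-based case analysis on short closed walks: one has to show that the numerical bound ``$>6$'' is exactly enough to preclude the bad configurations while ``$>3$'' (as in the later algebraic algorithm) would not be, and to handle the boundary cases where a vertex has degree exactly $2$ so the ``other neighbor'' is unique. I expect the cleanest route is to phrase everything in terms of the grid graph $\mathcal G$ restricted to the ball of radius $3$ around the edge $(ij)$, note that girth $>6$ makes this ball a tree, and then the pattern in $\mathcal{GM}_{hd}$ becomes a purely tree-combinatorial statement that is checked directly; the Remark's measure-zero caveat lets me assume every grid-edge and every length-$2$ grid-path does induce the corresponding $\mathcal{GM}_{hd}$ edge, so no algebraic cancellation needs to be tracked.
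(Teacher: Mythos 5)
Your overall route is the paper's route: use Theorem~\ref{structure_LC_PF} to translate adjacency in $\mathcal{GM}_{hd}$ into ``grid distance at most two,'' take the if-direction witnesses to be a neighbor $k\neq j$ of $i$ and a neighbor $l\neq i$ of $j$, and handle the leaf exclusion by noting that every pair of nodes within two hops of a leaf is itself within two hops (so no valid $k,l$ can exist for a leaf). Two remarks. First, the ``fiddly bit'' you flag in the if direction is not actually fiddly: for \emph{any} choice of $k\neq j$ adjacent to $i$ and $l\neq i$ adjacent to $j$, the cases $k=l$, $(kl)\in\mathcal{E}$, or $k,l$ sharing a neighbor would create a cycle of length $3$, $4$, or $5$ together with the edge $(ij)$, so girth $>6$ already gives $d_{\mathcal G}(k,l)=3$ and hence $(kl)\notin\mathcal{GM}_{hd}$; no careful selection is needed. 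This is exactly the paper's construction.

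The genuine gap is in the crux of the only-if direction. You assume $d_{\mathcal G}(i,j)=2$ via $m$ and argue that the closed walk $k\text{--}\cdots\text{--}i\text{--}\cdots\text{--}l\text{--}\cdots\text{--}j\text{--}\cdots\text{--}k$ of length at most $8$ ``must collapse'' under girth $>6$. It need not: girth $>6$ only excludes cycles of length $\le 6$, so an $8$-edge (or $7$-edge) closed walk can be a simple cycle with no identifications at all, and your routing does not even pass through $m$, so the hypothesis $d_{\mathcal G}(i,j)=2$ is not being used where it must be. The working argument --- and the paper's --- treats each witness separately: girth $>6$ makes the common neighbor $m$ of $i,j$ unique, and if $k\notin\{m\}\cup N(m)$ then the two paths of length $\le 2$ from $k$ to $i$ and from $k$ to $j$, concatenated with $i\text{--}m\text{--}j$, yield a cycle of length at most $6$ (after discarding the degenerate coincidences), a contradiction; the same holds for $l$. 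Hence both $k$ and $l$ lie in $\{m\}\cup N(m)$, giving $d_{\mathcal G}(k,l)\le 2$ and contradicting $(kl)\notin\mathcal{GM}_{hd}$. So the conclusion you aim for ($k,l$ forced within two hops) is correct, but it is obtained by pinning each of $k$ and $l$ to the closed neighborhood of $m$ via $\le 6$-cycles, not by collapsing a single $\le 8$-edge walk; as stated, your enumeration cannot close because the $7$- and $8$-cycle alternatives produce no forced identifications.
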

The proof is presented in Appendix \ref{proof_condind}. While the above theorem enables identification of non-leaf nodes and edges between them in $\mathcal{G}$, we still need to find edges that connect leaves to their true parent. The following result enables identification of true edges associated with leaf nodes, with proof given in Appendix \ref{proof_condind_leaf}.
\begin{theorem}\label{condind_leaf}
Let the graphical model $\mathcal{GM}$ for voltages in power grid $\mathcal{G}$ be estimated and the `hybrid' graph $\mathcal{GM}_{hd}$ constructed by merging voltages and phase angles at the same bus. Let minimum cycle length in $\mathcal{G}$ be greater than $6$ and number of non-leaf nodes be at least $3$. Consider non-leaf node $i$ and leaf node $j$ in $\mathcal{G}$ such that $(ij)$ is in $\mathcal{GM}_{hd}$. Then, $(ij)$ is a true edge in $\cal G$ \textbf{iff} non-leaf neighbors of $i$ in ${\cal G}$ and non-leaf neighbors of node $j$ in $\mathcal{GM}_{hd}$ are the same.
\end{theorem}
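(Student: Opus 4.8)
The plan is to recast the stated condition as a purely graph-theoretic assertion about $\mathcal{G}$ and then check both directions. By Theorem~\ref{structure_LC_PF} together with the non-degeneracy remark following it, $\mathcal{GM}_{hd}$ contains an edge $(ab)$ exactly when $a$ and $b$ are at distance at most two in $\mathcal{G}$. Let $p$ be the unique neighbour of the leaf $j$ in $\mathcal{G}$. Since $\mathcal{G}$ is connected and has at least three non-leaf nodes it is not a single edge, so $\deg_{\mathcal{G}}(p)\ge 2$ and $p$ is itself non-leaf. Because every path leaving $j$ must pass through $p$, the set of nodes at distance $\le 2$ from $j$ is $\{p\}\cup(N_{\mathcal{G}}(p)\setminus\{j\})$; hence the neighbours of $j$ in $\mathcal{GM}_{hd}$ form exactly this set, and restricting to non-leaf nodes gives $N^{nl}_{\mathcal{GM}_{hd}}(j)=\{p\}\cup N^{nl}_{\mathcal{G}}(p)$, where $N^{nl}_{\mathcal{G}}(\cdot)$ denotes non-leaf neighbours in $\mathcal{G}$. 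Note that $N^{nl}_{\mathcal{G}}(i)$ is a legitimately available quantity: the minimum-cycle-length-$>6$ hypothesis lets us invoke Theorem~\ref{condind_loopy} to reconstruct the subgraph of $\mathcal{G}$ induced on non-leaf nodes.

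For the ``if'' direction, suppose $(ij)$ is a true edge of $\mathcal{G}$. Since $j$ is a leaf, its unique neighbour is $i$, so $i=p$, and substituting $i=p$ into the identity above yields $N^{nl}_{\mathcal{GM}_{hd}}(j)=\{i\}\cup N^{nl}_{\mathcal{G}}(i)$. This is exactly the coincidence of the two neighbour sets asserted by the theorem (the adjoined $\{i\}$ merely recording that the edge $(ij)$ makes $i$ one of $j$'s $\mathcal{GM}_{hd}$-neighbours; equivalently, one is comparing the closed non-leaf neighbourhood of $i$).

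For the ``only if'' direction I would argue by contraposition. Assume $(ij)\in\mathcal{GM}_{hd}$ but $(ij)$ is not a true edge; then $i$ is at distance exactly two from $j$, so $i\in N_{\mathcal{G}}(p)$ with $i\ne p$, and $i,p$ are adjacent non-leaf nodes. Suppose, toward a contradiction, that the neighbourhood condition held, i.e.\ $\{i\}\cup N^{nl}_{\mathcal{G}}(i)=\{p\}\cup N^{nl}_{\mathcal{G}}(p)=:S$, so that $\{i,p\}\subseteq S$. If $S$ contained a non-leaf node $w\notin\{i,p\}$, then $w$ lies in the left-hand set with $w\ne i$, forcing $w\in N_{\mathcal{G}}(i)$, and in the right-hand set with $w\ne p$, forcing $w\in N_{\mathcal{G}}(p)$; together with the edge $(ip)$ this yields a triangle on $\{i,w,p\}$, impossible since the minimum cycle length exceeds $3$. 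Hence $S=\{i,p\}$, i.e.\ $N^{nl}_{\mathcal{G}}(i)=\{p\}$ and $N^{nl}_{\mathcal{G}}(p)=\{i\}$, so the pair $\{i,p\}$ joined by the edge $(ip)$ is an entire connected component of the subgraph of $\mathcal{G}$ induced on its non-leaf nodes. But that induced subgraph is connected --- any path in $\mathcal{G}$ joining two non-leaf nodes has only interior vertices of degree $\ge 2$, which are therefore non-leaf --- and by hypothesis it has at least three vertices; contradiction. Thus the neighbourhood condition fails whenever $(ij)$ is not a true edge, which is the required implication.

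On where the hypotheses are used and what is delicate: the minimum-cycle-length-$>6$ assumption is needed only so that, via Theorem~\ref{condind_loopy}, the non-leaf neighbourhoods in $\mathcal{G}$ are known; the sole structural fact actually exploited in the argument is the absence of triangles. The ``at least three non-leaf nodes'' hypothesis is precisely what rules out the degenerate graph in which $\{i,p\}$ would genuinely be the entire non-leaf part of $\mathcal{G}$. I expect the ``only if'' direction to be the main obstacle: the crux is to recognise that the non-leaf-induced subgraph of a connected graph is itself connected, which is what promotes the local ``no triangle'' observation to a global contradiction; one also has to be careful with the bookkeeping of whether $i$ itself is counted among the compared sets (open versus closed neighbourhood), since the literal reading of the stated condition is off by the single vertex $i$.
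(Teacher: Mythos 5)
Your proof is correct, and both directions rest on the same ingredients the paper uses: the characterization of $\mathcal{GM}_{hd}$-edges as distance-$\le 2$ pairs in $\mathcal{G}$ (Theorem~\ref{structure_LC_PF}), triangle-freeness, and the hypothesis of at least three non-leaf nodes. The organization of the harder direction differs, though. The paper argues it as its ``if'' part by directly exhibiting a witness: with $k$ the true parent of the leaf $j$ and $(ij)$ spurious, it asserts that some non-leaf node $r$ must sit adjacent to $i$ or to $k$, and then distance bookkeeping shows $r$ lies in one of the two neighbor sets but not the other. You instead argue by contraposition: equality of the (non-leaf) neighbor sets plus triangle-freeness forces $N^{nl}_{\mathcal{G}}(i)=\{p\}$ and $N^{nl}_{\mathcal{G}}(p)=\{i\}$, so $\{i,p\}$ would be an entire connected component of the subgraph induced on non-leaf nodes, contradicting its connectivity together with the $\ge 3$ non-leaf-node assumption. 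These are two sides of the same coin --- the paper's existence claim for $r$ implicitly relies on exactly the connectivity fact you prove --- but your version makes that step explicit (interior vertices of a simple path have degree $\ge 2$, hence the non-leaf-induced subgraph of a connected graph is connected), which is a genuine tightening of the paper's brisk case analysis. You also correctly flag, and consistently resolve, the small bookkeeping issue that under a literal reading the two compared sets differ by the single vertex $i$ (closed versus open neighborhood); the paper's ``and vice versa'' glosses over the same point. Your observation that only absence of triangles is structurally needed for this theorem, with the cycle-length-$>6$ hypothesis serving the surrounding algorithm (Theorem~\ref{condind_loopy}), is accurate.
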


\begin{figure}[bt]
\centering
\includegraphics[width=0.4\textwidth]{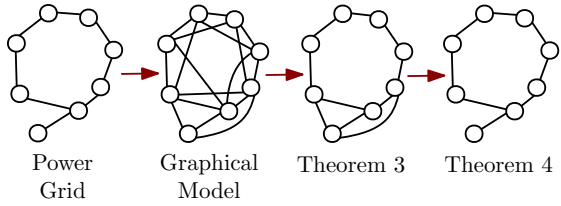}
\vspace{-.25cm}
\caption{Steps in Algorithm $1$. The power grid on the left gives rise to the graphical model with edges between nodes two hops or less away. First Theorem \ref{condind_loopy} is used to identify the true edges between non-leaf nodes and then Theorem \ref{condind_leaf} is used to identify the parents of leaf nodes.}
\label{fig:example}
\vspace{-3mm}
\end{figure}
We now have the necessary tools to develop topology learning algorithm for loopy grids provided the assumptions of minimum cycle length being greater than $6$, and number of non-leaf nodes greater than $2$ hold. {Note that the first condition is necessary for identification of edges between non-leaf nodes while the second is needed to identify neighbors of leaf nodes. If cycle length is $6$ or less, nodes separated by $3$ hops in the grid graph can have multiple shortest paths between them. This prevents separation between true and spurious edges in the corresponding graphical model. These assumptions are not restrictive in general as distribution grids, if loopy, have generally large loops (weakly meshed) and furthermore many (greater than $2$) non-leaf nodes \cite{NY,taiwan}.} The steps of learning the grid from the estimated graphical model $\mathcal{GM}_{hd}$ are listed in Algorithm $1$. An example for the learning steps is depicted in Fig.~\ref{fig:example}.

\begin{algorithm}
\caption{Topology Learning: Neighborhood Search}
\textbf{Input:} Inverse covariance matrix of nodal voltages,\\ $~~~~~~~~~~\Sigma^{-1}_{(v,\theta)} =$ \text{\small$\begin{bmatrix} J_{vv} &J_{v\theta}\\J_{\theta v}&J_{\theta\theta}\end{bmatrix}$}, threshold $\tau_1 >0$ \\
\textbf{Output:} Grid $\cal G$ \\
\begin{algorithmic}[1]
\State Construct hybrid graphical model $\mathcal{GM}_{hd}$ with edges $(ij)$ if $|J_{vv}(i,j)| >\tau_1$ for buses $i\neq j$. \label{alg_1_step1}
\ForAll{Edge $(ij) \in \mathcal{GM}_{hd}$}
\If{Nodes $k,l$ exist satisfying Theorem \ref{condind_loopy}}
\State Add nodes $i,j$ as non-leaf nodes with true edge $(ij)$ in $\mathcal{G}$
\EndIf
\EndFor
\State Add unmarked nodes as leaves in $\mathcal{G}$.
\ForAll{Leaf node $j$ and non-leaf node $i$ in $\mathcal{G}$}
\If{Edge $(ij) \in \mathcal{GM}_{hd}$ and $i,j$ satisfy Theorem \ref{condind_leaf}}
\State Add edge $(ij)$ in $\mathcal{G}$.
\EndIf
\EndFor
\end{algorithmic}
\end{algorithm}
{\textbf{Thresholding for Empirical Estimates:} Note that empirical estimate of voltage covariance may differ from the true value due to finite samples. To address the finiteness of data, we introduce thresholds $\tau_1>0$ in Step \ref{alg_1_step1} of Algorithm $1$ to infer existence of edge $(ij)$ in the hybrid $\mathcal{GM}_{hd}$. {Similar thresholds are introduced in subsequent algorithms in this article as well. Historical training data, including that from topology reconfiguration events, can be used for identifying optimal thresholds. Threshold selection depends on the noise levels as theoretically analyzed in Section \ref{sec:noise1}. We demonstrate the effect of threshold selection on algorithm's performance in Section \ref{sec:simulations}.}

\textbf{Computational Complexity:} Considering general grids without any restriction on nodal degree, determining edges between non-leaf nodes takes $O(N^4)$
comparisons between each candidate edge $(ij)$ and their neighbor sets. Determination of edges between leaves and their parents is $O(N^3)$ and thus the overall complexity is $O(N^4)$ in the worst case.

It is worth mentioning that prior work on learning radial distribution grids exactly using graphical models \cite{dekapscc} cannot be directly applied here as presence of multiple paths in loopy grids render conditional independence tests ineffective. On the other hand, approximate algorithms such as in \cite{liao2016urban} assume, based on realistic data, that the edges between two-hop neighbors in the graphical model are weaker than direct links and hence can be omitted. Our effort in Algorithm $1$ is to develop a theoretical learning approach where such assumptions are not needed, though it has topological restriction. In the next section, we move beyond topological rules and design an improved algorithm based on algebraic properties of the inverse covariance matrix.

\section{Topology Estimation using Sign Rules}
\label{sec:thres}
As discussed in the previous section, extracting the true grid topology $\mathcal{G}$ from the estimated graphical model $\mathcal{GM}$ or $\mathcal{GM}_{hd}$ requires separating the true edges from ones between two hop neighbors. The following result gives a sign based separation test between such edges.
\begin{theorem}\label{threshld_loopy}
Consider the graphical model $\mathcal{GM}$ of nodal voltages with inverse covariance matrix ${\Sigma^{-1}_{(v,\theta)}} = \begin{bmatrix} J_{vv} &J_{v\theta}\\J_{\theta v} &J_{\theta\theta}\end{bmatrix}$ in grid $\cal G$ with minimum cycle length greater than three. $J_{vv}(i,j) + J_{\theta\theta}(i,j) < 0$ if and only if $(ij)$ is a true edge in $\mathcal{G}$.
\end{theorem}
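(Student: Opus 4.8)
The plan is to derive a closed form for the matrix $J_{vv}+J_{\theta\theta}$ from Lemma~\ref{inverse_LC_PF} and then read off the signs of its off-diagonal entries from the sparsity and sign pattern of the weighted Laplacians $H_g,H_\beta$. Adding the expressions for the two diagonal blocks in Lemma~\ref{inverse_LC_PF}, the four terms containing the cross-covariance $\Sigma_{pq}$ enter with opposite signs and cancel, leaving
\[
J_{vv}+J_{\theta\theta}=H_g\,M\,H_g+H_\beta\,M\,H_\beta,\qquad M:=D^{-1}(\Sigma_{pp}+\Sigma_{qq}).
\]
Here $M$ is diagonal because $D,\Sigma_{pp},\Sigma_{qq}$ are, and under Assumption~$1$ the $2\times2$ injection covariance at each bus is nondegenerate, so $M(k,k)=(\Sigma_{pp}(k,k)+\Sigma_{qq}(k,k))/D(k,k)>0$ for every $k$. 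This reduces the theorem to reading the sign of an entry of a sum of two Laplacian products.

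Next I would expand the off-diagonal entry. For $i\neq j$, diagonality of $M$ gives $(J_{vv}+J_{\theta\theta})(i,j)=\sum_k M(k,k)\,[\,H_g(i,k)H_g(k,j)+H_\beta(i,k)H_\beta(k,j)\,]$. By (\ref{Laplacian}), $H_g(i,k)$ and $H_\beta(i,k)$ vanish unless $k=i$ or $(ik)\in\mathcal{E}$, so the only nonzero summands come from $k=i$, $k=j$, and any common neighbor $k$ of $i$ and $j$ in $\mathcal{G}$. The relevant signs are $H_g(i,i)=\sum_{l:(il)\in\mathcal{E}}g_{il}>0$ and $H_\beta(i,i)>0$, while for an edge $(ik)$ one has $H_g(i,k)=-g_{ik}<0$ and $H_\beta(i,k)=-\beta_{ik}<0$, since $g_{ik}=r_{ik}/(r_{ik}^2+x_{ik}^2)$ and $\beta_{ik}=x_{ik}/(r_{ik}^2+x_{ik}^2)$ are both positive for distribution lines.

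The argument then splits on whether $(ij)$ is an edge. If $(ij)\in\mathcal{G}$, the $k=i$ summand equals $-M(i,i)\big(g_{ij}\sum_l g_{il}+\beta_{ij}\sum_l \beta_{il}\big)<0$, the $k=j$ summand is likewise strictly negative, and there are no common-neighbor summands: a node $k$ adjacent to both $i$ and $j$ together with the edge $(ij)$ would form a $3$-cycle, contradicting minimum cycle length $>3$. Hence $(J_{vv}+J_{\theta\theta})(i,j)<0$. If $(ij)\notin\mathcal{G}$, then $H_g(i,j)=H_\beta(i,j)=0$ kills the $k=i$ and $k=j$ summands, and each remaining summand (one per common neighbor $k$) equals $M(k,k)\big(g_{ik}g_{kj}+\beta_{ik}\beta_{kj}\big)>0$; so the entry is $\ge 0$, equal to $0$ exactly when $i,j$ are more than two hops apart and strictly positive when they are two-hop neighbors. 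Combining the two cases yields the claimed equivalence.

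The crux of the proof is the sign competition between the negative ``direct-edge'' contributions at $k=i,j$ and the positive ``two-hop'' contributions at common neighbors; without a structural assumption these could offset and destroy the test. The no-triangle hypothesis (minimum cycle length $>3$) is exactly what prevents both from occurring for the same pair $(i,j)$ — a true edge can never coexist with a common neighbor — so each sign becomes unconditional. The remaining ingredients, positivity of $M(k,k)$ and of the line parameters $g_{ij},\beta_{ij}$, are routine; note moreover that the measure-zero degeneracies excluded by the remark after Theorem~\ref{structure_LC_PF} do not even arise here, since for an edge the $(ij)$-entry is a strictly negative sum of same-sign terms with no possibility of cancellation.
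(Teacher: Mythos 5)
Your proposal is correct and follows essentially the same route as the paper's proof: both reduce to the identity $J_{vv}+J_{\theta\theta}=H_gD^{-1}(\Sigma_{pp}+\Sigma_{qq})H_g+H_\beta D^{-1}(\Sigma_{pp}+\Sigma_{qq})H_\beta$ (cancellation of the $\Sigma_{pq}$ terms), then read off the off-diagonal sign by cases — strictly negative edge contributions at $k=i,j$, strictly positive common-neighbor contributions, and the no-triangle condition ensuring the two cannot coexist. Your write-up merely makes explicit a few details the paper leaves implicit (the $\Sigma_{pq}$ cancellation and the positivity of $g_{ij},\beta_{ij}$).
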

The proof is provided in Appendix \ref{proof_threshld_loopy}. Note that the minimum cycle length being greater than $3$ is necessary to prove Theorem \ref{threshld_loopy}. However absence of triangles (three node cycles) is not a restrictive assumption as urban networks indeed have grid-like layout. It is worth mentioning that a similar sign based rule for voltage magnitudes was proposed in learning radial distribution grids with constant $r/x$ transmission lines in \cite{bolognani2013identification}. Using nodal voltage magnitudes and phase angles, our method is thus able to generalize that on both counts: (a) loopy grid learning (b) variable $r/x$ line parameters. We summarize the steps in using signs for topology estimation in Algorithm $2$. {To account for finite sample errors in estimated inverse covariance matrix, we introduce threshold $\tau_2>0$ in Step \ref{alg_2_step1} of Algorithm $2$ to determine if $(i,j)$ is an edge.}
\begin{algorithm}
\caption{Topology Learning: Sign Rule}
\textbf{Input:} Inverse covariance matrix of nodal voltages,\\ $~~~~~~~~~~\Sigma^{-1}_{(v,\theta)} =$ \text{\small$\begin{bmatrix} J_{vv} &J_{v\theta}\\J_{\theta v}&J_{\theta\theta}\end{bmatrix}$}, threshold $\tau_2 >0$ \\
\textbf{Output:} Grid $\cal G$
\begin{algorithmic}[1]
\ForAll{buses $i,j \in {\cal G}$}
\If{$J_{vv}(i,j)+J_{\theta\theta}(i,j) < -\tau_2$} \label{alg_2_step1}
\State Insert $(ij)$ in $\cal G$
\EndIf
\EndFor
\end{algorithmic}
\end{algorithm}

\textbf{Computational Complexity:} Determining the edges from the entries of the inverse covariance matrix of voltages takes $O(N^2)$ operations as all possible edge pairs are checked for existence.

Notice that both Algorithm $1$ and Algorithm $2$ use only voltage measurements to estimate the topology. In case additionally statistics of injections at the nodes is also available, both the topology as well as values of line impedances can be determined as described in the next result.
\begin{lemma}\label{thm:completion}
In grid $\cal G$ with voltage covariance matrix $\Sigma_{(v,\theta)}$ and injection covariance matrix $\Sigma_{(p,q)}$, matrices $H_g$ and $H_\beta$ are given by the following
\begin{align}
H_{(g,\beta)} =\begin{bmatrix}H_g &H_{\beta}\\ H_{\beta} &-H_{g}\end{bmatrix} = {\Sigma^{1/2}_{(p,q)}}\sqrt{{\Sigma^{-1/2}_{(p,q)}}{\Sigma^{-1}_{(v,\theta)}}{\Sigma^{-1/2}_{(p,q)}}}{\Sigma^{1/2}_{(p,q)}}.\nonumber
\end{align}
\end{lemma}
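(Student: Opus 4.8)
The plan is to reduce the statement to the algebraic identity $\Sigma^{-1}_{(v,\theta)} = H_{(g,\beta)}\,\Sigma^{-1}_{(p,q)}\,H_{(g,\beta)}$ and then to ``invert'' it symmetrically. First I would observe that $H_{(g,\beta)}$ is symmetric: $H_g$ and $H_\beta$ are reduced weighted Laplacians, hence symmetric, and the $2\times 2$ block arrangement $\begin{bmatrix}H_g & H_\beta\\ H_\beta & -H_g\end{bmatrix}$ preserves this, so $H^{-1}_{(g,\beta)}$ is symmetric as well. Then (\ref{covar_varepsilon}) reads $\Sigma_{(v,\theta)} = H^{-1}_{(g,\beta)}\Sigma_{(p,q)}H^{-1}_{(g,\beta)}$, and inverting both sides, using that $\Sigma_{(p,q)}$ is nonsingular (already implicit in Lemma~\ref{inverse_LC_PF} via $D^{-1}$), gives $\Sigma^{-1}_{(v,\theta)} = H_{(g,\beta)}\Sigma^{-1}_{(p,q)}H_{(g,\beta)}$.

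Next I would whiten with respect to the injection covariance. Since $\Sigma_{(p,q)}\succ 0$, its symmetric positive-definite square root $\Sigma^{1/2}_{(p,q)}$ and the inverse $\Sigma^{-1/2}_{(p,q)}$ exist, with $\Sigma^{-1}_{(p,q)} = \Sigma^{-1/2}_{(p,q)}\Sigma^{-1/2}_{(p,q)}$. Set $M := \Sigma^{-1/2}_{(p,q)}H_{(g,\beta)}\Sigma^{-1/2}_{(p,q)}$, which is symmetric. Conjugating the identity above by $\Sigma^{-1/2}_{(p,q)}$ yields $\Sigma^{-1/2}_{(p,q)}\Sigma^{-1}_{(v,\theta)}\Sigma^{-1/2}_{(p,q)} = \big(\Sigma^{-1/2}_{(p,q)}H_{(g,\beta)}\Sigma^{-1/2}_{(p,q)}\big)\big(\Sigma^{-1/2}_{(p,q)}H_{(g,\beta)}\Sigma^{-1/2}_{(p,q)}\big) = M^2$. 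Hence $M$ is a symmetric square root of the positive-definite matrix $\Sigma^{-1/2}_{(p,q)}\Sigma^{-1}_{(v,\theta)}\Sigma^{-1/2}_{(p,q)}$, and undoing the congruence gives $H_{(g,\beta)} = \Sigma^{1/2}_{(p,q)}M\Sigma^{1/2}_{(p,q)} = \Sigma^{1/2}_{(p,q)}\sqrt{\Sigma^{-1/2}_{(p,q)}\Sigma^{-1}_{(v,\theta)}\Sigma^{-1/2}_{(p,q)}}\,\Sigma^{1/2}_{(p,q)}$, which is the claimed formula once the correct branch of the matrix square root is selected.

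The branch selection is the point I expect to be the main obstacle. The square root here cannot be taken as the usual principal (positive semidefinite) root: $-H_g$ is negative definite because $H_g$ is the reduced Laplacian of a connected grid, so $H_{(g,\beta)}$ is indefinite with exactly $N$ positive and $N$ negative eigenvalues, and by Sylvester's law of inertia $M$ shares this signature; thus $M$ is the symmetric square root of $M^2$ whose inertia is $(N,N)$, not $\lvert M\rvert$. I would pin this branch down using the extra structure at hand: under Assumption~$1$ every block of $\Sigma_{(p,q)}$ is diagonal, so after the obvious node reordering $\Sigma_{(p,q)}$—and therefore $\Sigma^{\pm1/2}_{(p,q)}$ and the square-root factor—is block diagonal with $2\times 2$ blocks, and the desired root is the unique one for which the reconstructed $H_{(g,\beta)}$ exhibits the $\begin{bmatrix}H_g & H_\beta\\ H_\beta & -H_g\end{bmatrix}$ pattern with $H_g,H_\beta$ valid reduced Laplacians. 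Once $H_{(g,\beta)}$ is recovered, $H_g$ and $H_\beta$ are read off blockwise: the positions of their nonzero off-diagonal entries give the topology ${\cal E}$, and the values give $g_{ij},\beta_{ij}$, hence the line impedances $z_{ij}$. Everything other than the branch discussion is routine conjugation bookkeeping.
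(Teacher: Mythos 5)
Your proposal follows the same route as the paper's own proof: invert Eq.~(\ref{covar_varepsilon}) to obtain $\Sigma^{-1}_{(v,\theta)} = H_{(g,\beta)}\Sigma^{-1}_{(p,q)}H_{(g,\beta)}$, conjugate by $\Sigma^{-1/2}_{(p,q)}$, recognize $M=\Sigma^{-1/2}_{(p,q)}H_{(g,\beta)}\Sigma^{-1/2}_{(p,q)}$ as a symmetric square root of the positive definite whitened matrix, and undo the congruence. Where you go beyond the paper is the branch discussion, and your observation there is correct and worth keeping: $-H_g$ is negative definite, so $H_{(g,\beta)}$ is indefinite (conjugating by the orthogonal matrix $\begin{bmatrix}0 & I\\ -I & 0\end{bmatrix}$ sends it to its negative, so its inertia is exactly $(N,N)$), and by Sylvester's law $M$ inherits this inertia; hence $M$ cannot be the principal (positive semidefinite) square root, and the paper's assertion that the whitened matrix ``has a unique square root $\Sigma^{-1/2}_{(p,q)}H_{(g,\beta)}\Sigma^{-1/2}_{(p,q)}$'' is imprecise, since uniqueness holds only within the PSD class to which $M$ does not belong. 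The soft spot in your write-up is the final branch-pinning step: you declare the correct branch to be ``the unique'' symmetric root whose reconstruction has the $\begin{bmatrix}H_g & H_\beta\\ H_\beta & -H_g\end{bmatrix}$ Laplacian pattern, but a positive definite matrix admits many symmetric square roots (one for each sign/reflection choice on its eigenspaces), and you do not prove that only one of them is pattern-compatible — so your fix is an assertion rather than a proof, albeit no less rigorous than the paper's own treatment. Also, your parenthetical suggestion that the square-root factor itself becomes $2\times 2$ block diagonal after node reordering would be false if it refers to $\sqrt{\Sigma^{-1/2}_{(p,q)}\Sigma^{-1}_{(v,\theta)}\Sigma^{-1/2}_{(p,q)}}$: that factor is congruent to $H_{(g,\beta)}$ and reflects the grid structure; only the whitening matrices $\Sigma^{\pm 1/2}_{(p,q)}$ are block diagonal under Assumption~1. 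In short, same approach as the paper, with a legitimate and useful sharpening of the square-root branch issue that the published proof glosses over, but the uniqueness of the pattern-compatible branch still needs an argument.
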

\begin{proof}
Using Eq.~(\ref{covar_varepsilon}), it is clear that ${\Sigma^{-1/2}_{(p,q)}}{\Sigma^{-1}_{(v,\theta)}}{\Sigma^{-1/2}_{(p,q)}}$ is a positive definite matrix and hence has a unique square root $\Sigma^{-1/2}_{(p,q)}H_{(g,\beta)}\Sigma^{-1/2}_{(p,q)}$. The result thus follows.
\end{proof}
In the next section, we look at a related problem of topology change detection, instead of topology estimation and discuss properties of voltage graphical models that enable detection.
\section{Topology Change Detection}
\label{sec:change}
Topology change detection refers to the problem of identifying changes to the topology such as line failures. In radial distribution grids, line failure can be immediately estimated given by the sub-network that loses power. In loopy girds, due to presence of multiple paths, estimation based on connectivity loss may not be possible. In the next result, we consider a single change in topology (line addition or failure) and show that such an event can be detected by comparing the change in the inverse covariance matrix of voltage fluctuations before and after the event.
\begin{theorem}\label{res:detection}
For grid $\mathcal{G}$ with inverse covariance matrix of voltage measurements ${\Sigma^{-1}_{(v,\theta)}} = \begin{bmatrix} J_{vv} &J_{v\theta}\\J_{\theta v} &J_{\theta\theta}\end{bmatrix}$, let $\Delta J_{vv}$ be the change in sub-matrix $J_{vv}$ after a single line event (addition or removal). Similarly define $\Delta J_{\theta\theta}$ Then the following hold:\\
\begin{align}
\begin{array}{l}
\Delta J_{vv}(i,i)+\Delta J_{\theta\theta}(i,i),\\
\Delta J_{vv}(j,j)+\Delta J_{\theta\theta}(j,j)
\end{array} & = \begin{cases} 0 \text{~no change in edge (ij)}\\
>0 \text{~edge (ij) is added}\\
<0 \text{~edge (ij) is removed}\end{cases}.\nonumber
\end{align}
\end{theorem}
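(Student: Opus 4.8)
The plan is to reduce the statement to the explicit form of $\Sigma^{-1}_{(v,\theta)}$ in Lemma \ref{inverse_LC_PF}. First I would add the blocks $J_{vv}$ and $J_{\theta\theta}$: writing out the four products in each and using that $D^{-1},\Sigma_{pp},\Sigma_{qq},\Sigma_{pq}$ are diagonal (hence mutually commuting), the two terms containing $\Sigma_{pq}$ in $J_{vv}$ cancel against the two $\Sigma_{pq}$-terms in $J_{\theta\theta}$, leaving
\[
J_{vv}+J_{\theta\theta}=H_g\,M\,H_g+H_\beta\,M\,H_\beta,\qquad M:=D^{-1}(\Sigma_{pp}+\Sigma_{qq}),
\]
where $M$ is diagonal with strictly positive entries $M(k,k)=\big(\Sigma_{pp}(k,k)+\Sigma_{qq}(k,k)\big)/\big|\Sigma_{pp}(k,k)\Sigma_{qq}(k,k)-\Sigma_{pq}^2(k,k)\big|$, positivity following from nonnegativity of the injection variances and positive-definiteness of each $2\times2$ injection covariance block. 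The crucial point is that $M$ involves \emph{only} injection statistics, which a line event does not alter; a topology event changes only $H_g$ and $H_\beta$. Since $H_g,H_\beta$ are symmetric, $\big(J_{vv}+J_{\theta\theta}\big)(i,i)=\sum_k M(k,k)\big(H_g(i,k)^2+H_\beta(i,k)^2\big)$.

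Next I would track which entries of $H_g$ (and identically $H_\beta$) move under a single event on edge $(ij)$. From (\ref{Laplacian}), only $H_g(i,i)$, $H_g(j,j)$ and $H_g(i,j)=H_g(j,i)$ change: on an addition $H_g(i,i)$ and $H_g(j,j)$ increase by $g_{ij}$ while $H_g(i,j)$ goes from $0$ to $-g_{ij}$ (all reversed on a removal), where $g_{ij}>0$ and $\beta_{ij}>0$ by the convention $g_{ij}+\hat{i}\beta_{ij}=1/(r_{ij}-\hat{i}x_{ij})$ with $r_{ij},x_{ij}>0$, so the reduced Laplacians have nonnegative diagonals. Hence for any bus $m\notin\{i,j\}$ no summand of $\sum_k M(k,k)(H_g(m,k)^2+H_\beta(m,k)^2)$ changes, giving the ``no change'' ($=0$) case. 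For bus $i$ only the $k=i$ and $k=j$ summands move; on an addition $\Delta\big(H_g(i,i)^2\big)=2H_g(i,i)g_{ij}+g_{ij}^2>0$ (using $H_g(i,i)\ge0$) and $\Delta\big(H_g(i,j)^2\big)=g_{ij}^2>0$, and likewise for $H_\beta$, so with the positive weights $M(i,i),M(j,j)$ we get $\Delta J_{vv}(i,i)+\Delta J_{\theta\theta}(i,i)>0$; the identical argument at bus $j$ gives the second inequality. For a removal each increment flips sign — here I would invoke that just before the removal $H_g(i,i)=\sum_{k:(ik)\in\mathcal{E}}g_{ik}\ge g_{ij}$, which forces $\Delta\big(H_g(i,i)^2\big)=-\big(2H_g(i,i)g_{ij}-g_{ij}^2\big)\le-g_{ij}^2<0$ — yielding $<0$ at both $i$ and $j$.

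The main obstacle is not the cancellation or the positivity of $M$ (both routine), but the monotone sign control of the diagonal-square increment $\Delta(H_g(i,i)^2)$: it needs (i) the correct signs of $g_{ij},\beta_{ij}$ so that $H_g,H_\beta$ have nonnegative diagonals, and (ii) the bound $H_g(i,i)\ge g_{ij}$ that holds precisely when $(ij)$ is an existing edge about to be removed — without it the sign of the increment is a priori ambiguous. I would also remark that ``no change in edge $(ij)$'' should be read as ``neither $i$ nor $j$ is an endpoint of the changed line,'' and that since a single event is one edge, any candidate edge other than the changed one shares at most one endpoint with it, so only the genuinely changed edge exhibits a nonzero, equal-sign pair of diagonal changes — which is what lets the test localize the event.
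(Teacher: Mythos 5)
Your proposal is correct and follows essentially the same route as the paper: both reduce the statement to the diagonal of $J_{vv}+J_{\theta\theta}=H_gD^{-1}(\Sigma_{pp}+\Sigma_{qq})H_g+H_\beta D^{-1}(\Sigma_{pp}+\Sigma_{qq})H_\beta$ obtained from Lemma \ref{inverse_LC_PF}, and then track which entries of $H_g,H_\beta$ move when edge $(ij)$ is added or removed. Your treatment is in fact slightly more explicit than the paper's where it matters (the removal case, via $H_g(i,i)\ge g_{ij}$, and the sign convention $g_{ij},\beta_{ij}>0$), which the paper leaves as ``follows similarly.''
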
\begin{proof}
Let the neighborhood of node $i$ be $\mathcal{N}_i$. From Lemma \ref{inverse_LC_PF}, it is clear that the expression for the diagonal of $J_{vv}$ follows:
 \begin{align}
 &J_{vv}(i,i)+J_{\theta\theta}(i,i) =\nonumber\\
 &\smashoperator[lr]{\sum_{k \in \mathcal{N}_i\cup\{i\}}}D^{-1}(k,k)(\Sigma_{qq}(k,k)+\Sigma_{pp}(k,k))(H^2_g(i,k)+ H^2_\beta(i,k))\nonumber
 \end{align}
Note that this stays the same if the neighbor set $\mathcal{N}_i$ stays constant. Thus for nodes $i,j$, if the edge status doesn't change, $ J_{vv}(i,i)+J_{\theta\theta}(i,i)$ and $ J_{vv}(j,j)+J_{\theta\theta}(j,j)$ equal zero. Next consider the case where edge $(ij)$ is added to the network. Comparing before and after the event, we have
 \begin{align}
 &\Delta J_{vv}(i,i)+\Delta J_{\theta\theta}(i,i) =\nonumber\\
 ~&2D^{-1}(i,i)(\Sigma_{qq}(i,i)+\Sigma_{pp}(i,i))\smashoperator[lr]{\sum_{k \in \mathcal{N}_i -\{j\}}}(g_{ik}g_{ij}+ \beta_{ik}\beta_{ij})\nonumber\\
 ~+&D^{-1}(j,j)(\Sigma_{qq}(j,j)+\Sigma_{pp}(j,j))(g^2_{ij}+\beta^2_{ij})\nonumber\\
 ~+&D^{-1}(i,i)(\Sigma_{qq}(i,i)+\Sigma_{pp}(i,i))(g^2_{ij}+\beta^2_{ij}) > 0 \label{addtn}
 \end{align}
Similarly, we can show that $\Delta J_{vv}(j,j)+\Delta J_{\theta\theta}(j,j) >0$. For edge $(ij)$ removal, it follows similarly that the opposite sign in (\ref{addtn}) is derived. Hence the result holds.\end{proof}
The steps in topology change detection are listed in Algorithm $3$. {Similar to Algorithms $1$ and $2$, we introduce a threshold $\tau_3 >0$ in Step \ref{alg_3_step_1} of Algorithm $3$ to account for empirical errors while identifying the terminal nodes of the changed edge.}
\begin{algorithm}
\caption{Topology Change Detection}
\textbf{Input:} Inverse covariance matrix of nodal voltages,\\ $~\Sigma^{-1}_{(v,\theta)} =$ \text{\small$\begin{bmatrix} J_{vv} &J_{v\theta}\\J_{\theta v}&J_{\theta\theta}\end{bmatrix}$}
 before and after event, threshold $\tau_3 >0$\\
\textbf{Output:} Edge added/removed\\
\begin{algorithmic}[1]
\ForAll{buses $i \in {\cal G}$}
\If{$|\Delta J_{vv}(i,i)+\Delta J_{\theta\theta}(i,i)| > \tau_3$} \label{alg_3_step_1}
\State Mark $i$ as terminal node of changed edge.
\EndIf
\EndFor
\State For terminal nodes $i,j$, edge $(ij)$ is added if $\Delta J_{vv}(i,i)+\Delta J_{\theta\theta}(i,i) > 0$, else it is removed.
\end{algorithmic}
\end{algorithm}

\textbf{Computational Complexity:} Determining the possible changed edge takes $O(N)$ operations as it entails only checking the changes in the diagonal entries in the inverse covariance matrix.

Note that unlike the topology learning algorithms, topology change detection does not need any assumption of minimum cycle length. It is worth mentioning that in practical settings, line failure/change would entail dynamics in the nodal voltage measurements and other time-series techniques may be used for their detection. Our theoretical result is primarily directed at understanding if stable pre- and post event voltage graphical models can be used for change detection.
\section{Effect of noise and injection correlation}\label{sec:noise1}
{The designed Algorithms $1, 2$ and $3$ assume noiseless measurements of complex nodal voltages. {However, practical smart meter and PMU data are corrupted with measurement noise \cite{cavraro2017voltage,ami_noise,pmu_noise} with variance ranging between $.1\%$ to $.5\%$ of the measurement values.} We now analyze guarantees on their performance in the presence of zero-mean Gaussian measurement noise $(n_v,n_\theta)$ of non-zero covariance $\Sigma_{(n_v,n_\theta)}$, that is uncorrelated from the underlying true voltages. Let the voltage measurement with noise be denoted by \text{\footnotesize$\begin{bmatrix}\Tilde{v}\\\Tilde{\theta}\end{bmatrix} = \begin{bmatrix}v\\\theta\end{bmatrix} + \begin{bmatrix}n_v\\n_\theta\end{bmatrix}$}. By uncorrelation between true voltages and noise, the covariance matrix of the observed measurements is given by:
\begin{align}
 \Sigma_{(\Tilde{v},\Tilde{\theta})} &= \Sigma_{(v,\theta)} + \Sigma_{(n_v,n_\theta)},\text{where,}\label{deviatesigma}\\ \Sigma_{(n_v,n_\theta)} &=\begin{bmatrix}\Sigma_{n_vn_v} &\Sigma_{n_v n_\theta} \\ \Sigma_{n_\theta n_v} &\Sigma_{n_\theta n_\theta}\end{bmatrix}
 \text{and~~} \Sigma^{-1}_{(\Tilde{v},\Tilde{\theta})} = \begin{bmatrix} J_{\Tilde{v}\Tilde{v}} &J_{\Tilde{v}\Tilde{\theta}}\\J_{\Tilde{\theta}\Tilde{v}} &J_{\Tilde{\theta}\Tilde{\theta}}\end{bmatrix}\label{deviatesigma1}
\end{align}
Using the Woodbury formula \cite{hager1989updating} for matrix inversion, we have
\begin{align}
 \Delta\Sigma^{-1}_{(v,\theta)}&= \Sigma^{-1}_{(\Tilde{v},\Tilde{\theta})} - \Sigma^{-1}_{(v,\theta)}\nonumber\\ &= -\Sigma^{-1}_{(v,\theta)}(\Sigma^{-1}_{(n_v,n_\theta)} + \Sigma^{-1}_{(v,\theta)})^{-1}\Sigma^{-1}_{(v,\theta)}, \label{deviateinverse}
\end{align}
where $\Delta\Sigma^{-1}_{(v,\theta)}$ is the deviation in the inverse covariance matrix due to noise.
The following theorem provides an upper bound on magnitude of entries in $\Delta\Sigma^{-1}_{(v,\theta)}$.
\begin{lemma}\label{noise_bound}
Consider $\Delta\Sigma^{-1}_{(v,\theta)}=\Sigma^{-1}_{(\Tilde{v},\Tilde{\theta})}-\Sigma^{-1}_{(v,\theta)}$ due to noise of non-zero covariance $\Sigma_{(n_v,n_\theta)}$ in voltage measurements. Let $\lambda^{max}_X, \lambda^{min}_X$ denote the maximum and minimum eigenvalues respectively in square matrix $X$. Then
$$\max_{i,j} |\Delta\Sigma^{-1}_{(v,\theta)}(i,j)| \leq \frac{\lambda^{max}_{\Sigma_{(n_v,n_\theta)}}(\lambda^{max}_{H^2_{(g,\beta)}})^2}{(\lambda^{min}_{\Sigma_{(p,q)}})^2}$$
\end{lemma}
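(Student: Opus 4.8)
The plan is to bound the largest-magnitude entry of the symmetric matrix $\Delta\Sigma^{-1}_{(v,\theta)}$ by its spectral norm, and then control that norm using the closed form (\ref{deviateinverse}) together with submultiplicativity. First I would use that for any symmetric matrix $M$, $\max_{i,j}|M(i,j)| = \max_{i,j}|e_i^T M e_j| \le \|M\|_2 = \lambda^{max}_{|M|}$, so it suffices to bound $\|\Delta\Sigma^{-1}_{(v,\theta)}\|_2$. Applying the spectral norm to (\ref{deviateinverse}) and using $\|ABC\|_2 \le \|A\|_2\|B\|_2\|C\|_2$ gives
\[
\|\Delta\Sigma^{-1}_{(v,\theta)}\|_2 \le \|\Sigma^{-1}_{(v,\theta)}\|_2^2 \, \big\|(\Sigma^{-1}_{(n_v,n_\theta)} + \Sigma^{-1}_{(v,\theta)})^{-1}\big\|_2 .
\]

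Second, I would bound the middle factor. Both $\Sigma^{-1}_{(n_v,n_\theta)}$ and $\Sigma^{-1}_{(v,\theta)}$ are positive definite (the noise covariance is non-singular by hypothesis, and $\Sigma_{(v,\theta)}$ is a non-degenerate covariance), so by eigenvalue monotonicity (Weyl's inequality) the smallest eigenvalue of their sum is at least $\lambda^{min}_{\Sigma^{-1}_{(n_v,n_\theta)}} = 1/\lambda^{max}_{\Sigma_{(n_v,n_\theta)}}$; hence $\big\|(\Sigma^{-1}_{(n_v,n_\theta)} + \Sigma^{-1}_{(v,\theta)})^{-1}\big\|_2 \le \lambda^{max}_{\Sigma_{(n_v,n_\theta)}}$.

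Third, I would bound $\|\Sigma^{-1}_{(v,\theta)}\|_2$. Inverting (\ref{covar_varepsilon}) and using that $H_{(g,\beta)}$ is symmetric gives $\Sigma^{-1}_{(v,\theta)} = H_{(g,\beta)}\Sigma^{-1}_{(p,q)}H_{(g,\beta)}$, so by submultiplicativity $\|\Sigma^{-1}_{(v,\theta)}\|_2 \le \|H_{(g,\beta)}\|_2^2 \|\Sigma^{-1}_{(p,q)}\|_2 = \lambda^{max}_{H^2_{(g,\beta)}}/\lambda^{min}_{\Sigma_{(p,q)}}$, where I used $\|H_{(g,\beta)}\|_2^2 = \lambda^{max}_{H^2_{(g,\beta)}}$ (symmetry) and $\|\Sigma^{-1}_{(p,q)}\|_2 = 1/\lambda^{min}_{\Sigma_{(p,q)}}$. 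Squaring this and substituting it together with the bound from the second step into the chain of the first step yields exactly $\frac{\lambda^{max}_{\Sigma_{(n_v,n_\theta)}}(\lambda^{max}_{H^2_{(g,\beta)}})^2}{(\lambda^{min}_{\Sigma_{(p,q)}})^2}$, which is the claimed estimate.

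The computation is essentially routine norm bookkeeping; the only subtlety — rather than a real obstacle — is in the second step: one must check that all inverses appearing are well defined, i.e. that $\Sigma_{(n_v,n_\theta)}$ and $\Sigma_{(p,q)}$ are strictly positive definite (guaranteed by the ``non-zero covariance'' hypotheses and the non-degeneracy remark following Theorem \ref{structure_LC_PF}), and to invoke eigenvalue monotonicity correctly so that adding the positive semidefinite matrix $\Sigma^{-1}_{(v,\theta)}$ cannot decrease the smallest eigenvalue of the sum. Everything else follows from submultiplicativity of the spectral norm and the identity $\|e_i^T M e_j\|\le\|M\|_2$.
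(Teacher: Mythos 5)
Your proposal is correct and follows essentially the same route as the paper: bound the largest entry of $\Delta\Sigma^{-1}_{(v,\theta)}$ by its extreme eigenvalue (the paper does this with $e_i\pm e_j$ quadratic forms exploiting negative definiteness, you via $|e_i^T M e_j|\le\|M\|_2$), then apply submultiplicativity to the Woodbury expression, use $\lambda^{min}_{X+Y}\ge\lambda^{min}_X$ for the middle factor, and convert $\Sigma^{-1}_{(v,\theta)}=H_{(g,\beta)}\Sigma^{-1}_{(p,q)}H_{(g,\beta)}$ into eigenvalues of $H^2_{(g,\beta)}$ and $\Sigma_{(p,q)}$. The differences are purely cosmetic, so no further comparison is needed.
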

\begin{proof}Let $e_i$ be the basis vector in $\mathbb{R}^{2N}$ with $1$ at the $i^{th}$ entry and zero elsewhere. From Eq.~\ref{deviateinverse}, $\Delta\Sigma^{-1}_{(v,\theta)}$ is a symmetric and negative definite matrix. We have
\begin{align*}
(e^T_i - e^T_j)\Delta\Sigma^{-1}_{(v,\theta)}(e_i - e_j)&\leq 0,~-e^T_i\Delta\Sigma^{-1}_{(v,\theta)}e_i\leq \lambda^{max}_{-\Delta\Sigma^{-1}_{(v,\theta)}},\nonumber\\
(e^T_i + e^T_j)\Delta\Sigma^{-1}_{(v,\theta)}(e_i + e_j)&< 0,-e^T_j\Delta\Sigma^{-1}_{(v,\theta)}e_j\leq \lambda^{max}_{-\Delta\Sigma^{-1}_{(v,\theta)}}\nonumber
\end{align*}
Using that, we have,
\begin{align}
\max_{i,j} |\Delta\Sigma^{-1}_{(v,\theta)}(i,j)| &\leq \lambda^{max}_{-\Delta\Sigma^{-1}_{(v,\theta)}}\nonumber\\
&\leq (\lambda^{max}_{\Sigma^{-1}_{(v,\theta)}})^2/\lambda^{min}_{(\Sigma^{-1}_{(n_v,n_\theta)} + \Sigma^{-1}_{(v,\theta)})}\label{n_1}\\
&\leq (\lambda^{max}_{H^2_{(g,\beta)}}\lambda^{max}_{\Sigma^{-1}_{(p,q)}})^2/\lambda^{min}_{\Sigma^{-1}_{(n_v,n_\theta)}}\label{n_2}\\
&\leq \lambda^{max}_{\Sigma_{(n_v,n_\theta)}}(\lambda^{max}_{H^2_{(g,\beta)}}/\lambda^{min}_{\Sigma_{(p,q)}})^2\label{n_3}
\end{align}
Here Eqs.~(\ref{n_1},\ref{n_3}) follow from $\lambda^{max}_X = \lambda^{min}_{X^{-1}}$ for invertible $X$. Eq.~(\ref{n_2}) follows from Eq.~(\ref{covar_varepsilon}) and using $\lambda^{min}_{X+Y}\geq \lambda^{min}_{X}$ for positive definite matrices $X, Y$.
\end{proof}
If noise across different buses is uncorrelated, then each block of $\Sigma_{(n_v,n_\theta)}$ in Eq.~(\ref{deviatesigma1}) will be a diagonal matrix. This is similar to the structure of $\Sigma_{(p,q)}$ in Eq.~(\ref{sigma_lc}) under Assumption $1$. In that setting, the following result follows immediately from Lemma \ref{noise_bound} through eigenvalue analysis of $2\times2$ matrices.
\begin{corollary} \label{col_noise}
Consider $\Delta\Sigma^{-1}_{(v,\theta)}=\Sigma^{-1}_{(\Tilde{v},\Tilde{\theta})}-\Sigma^{-1}_{(v,\theta)}$ where the noise $(n_v,n_\theta)$ is uncorrelated across buses and injections $(p,q)$ follow Assumption $1$. Let
\begin{align}
\footnotesize
\sigma^i_n &= \Sigma_{n_vn_v}(i,i)+\Sigma_{n_\theta n_\theta}(i,i)\label{col_n_sigma}\\
 ~~~&+\sqrt{(\Sigma_{n_vn_v}(i,i)-\Sigma_{n_\theta n_\theta}(i,i))^2+4\Sigma^2_{n_vn_\theta}(i,i)}\\
 \sigma^i_{(p,q)} &=\Sigma_{pp}(i,i)+\Sigma_{qq}(i,i)\nonumber\\
 ~~~&-\sqrt{(\Sigma_{pp}(i,i)-\Sigma_{qq}(i,i))^2+4\Sigma^2_{pq}(i,i)}. \label{col_p_sigma}
\end{align}
Then $\max_{i,j} |\Delta\Sigma^{-1}_{(v,\theta)}(i,j)| \leq \frac{2(\lambda^{max}_{H^2_{(g,\beta)}})^2 \max_i \sigma^i_n}{\min_i (\sigma^i_{(p,q)})^2}.$ \\
If $\Sigma_{pq}=0$ and $\Sigma_{n_vn_\theta} = 0$, then \\$\max_{i,j} |\Delta\Sigma^{-1}_{(v,\theta)}(i,j)| \leq \frac{(\lambda^{max}_{H^2_{(g,\beta)}})^2 \max_i \max(\Sigma_{n_vn_v}(i,i), \Sigma_{n_\theta n_\theta}(i,i))}{\min_i \min(\Sigma^2_{pp}(i,i),\Sigma^2_{qq}(i,i))}.$
\end{corollary}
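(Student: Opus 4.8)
The plan is to specialize the bound of Lemma~\ref{noise_bound} by exploiting that, under the stated hypotheses, both $\Sigma_{(n_v,n_\theta)}$ and $\Sigma_{(p,q)}$ become block-diagonal with $2\times 2$ blocks after one common symmetric permutation of the $2N$ coordinates. Concretely, reorder the coordinates of $\begin{bmatrix}\tilde v\\\tilde\theta\end{bmatrix}$ so that the magnitude-noise entry and phase-noise entry of each bus $i$ are adjacent; this is a similarity transform by a permutation matrix, so it preserves all eigenvalues. Since noise is uncorrelated across buses, every block $\Sigma_{n_vn_v},\Sigma_{n_vn_\theta},\Sigma_{n_\theta n_v},\Sigma_{n_\theta n_\theta}$ is diagonal, and the permuted $\Sigma_{(n_v,n_\theta)}$ is block-diagonal with $i$-th block $\left[\begin{smallmatrix}\Sigma_{n_vn_v}(i,i)&\Sigma_{n_vn_\theta}(i,i)\\ \Sigma_{n_vn_\theta}(i,i)&\Sigma_{n_\theta n_\theta}(i,i)\end{smallmatrix}\right]$. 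Under Assumption~$1$ the same permutation turns $\Sigma_{(p,q)}$ into a block-diagonal matrix whose $i$-th block is $\left[\begin{smallmatrix}\Sigma_{pp}(i,i)&\Sigma_{pq}(i,i)\\ \Sigma_{pq}(i,i)&\Sigma_{qq}(i,i)\end{smallmatrix}\right]$.

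Next I would use that the spectrum of a block-diagonal matrix is the union of the spectra of its blocks, together with the elementary fact that a symmetric $\left[\begin{smallmatrix}a&b\\ b&d\end{smallmatrix}\right]$ has eigenvalues $\tfrac12\big((a+d)\pm\sqrt{(a-d)^2+4b^2}\big)$. Taking the $+$ root for the noise blocks gives $\lambda^{max}_{\Sigma_{(n_v,n_\theta)}}=\tfrac12\max_i\sigma^i_n$ with $\sigma^i_n$ as in~(\ref{col_n_sigma}); taking the $-$ root for the injection blocks gives $\lambda^{min}_{\Sigma_{(p,q)}}=\tfrac12\min_i\sigma^i_{(p,q)}$ with $\sigma^i_{(p,q)}$ as in~(\ref{col_p_sigma}) (each injection block is positive definite, so $\sigma^i_{(p,q)}>0$ and this minimum is positive). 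Substituting these two values into Lemma~\ref{noise_bound} yields
\[
\max_{i,j}|\Delta\Sigma^{-1}_{(v,\theta)}(i,j)|\le\frac{\tfrac12(\max_i\sigma^i_n)\,(\lambda^{max}_{H^2_{(g,\beta)}})^2}{(\tfrac12\min_i\sigma^i_{(p,q)})^2}=\frac{2(\lambda^{max}_{H^2_{(g,\beta)}})^2\max_i\sigma^i_n}{\min_i(\sigma^i_{(p,q)})^2},
\]
which is the first asserted inequality. For the special case $\Sigma_{pq}=0$ and $\Sigma_{n_vn_\theta}=0$, each $2\times 2$ block is diagonal, so its eigenvalues are just its diagonal entries; hence $\lambda^{max}_{\Sigma_{(n_v,n_\theta)}}=\max_i\max(\Sigma_{n_vn_v}(i,i),\Sigma_{n_\theta n_\theta}(i,i))$ and $\lambda^{min}_{\Sigma_{(p,q)}}=\min_i\min(\Sigma_{pp}(i,i),\Sigma_{qq}(i,i))$, and since the injection variances are positive, squaring the latter distributes across the $\min$ as $\min_i\min(\Sigma^2_{pp}(i,i),\Sigma^2_{qq}(i,i))$. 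Plugging into Lemma~\ref{noise_bound} gives the second inequality.

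I do not expect a genuine obstacle here: the argument is essentially bookkeeping on $2\times 2$ spectra. The only points that need care are (i) confirming that a \emph{single} coordinate permutation simultaneously block-diagonalizes both $\Sigma_{(n_v,n_\theta)}$ and $\Sigma_{(p,q)}$, and that since this is a permutation similarity it leaves intact all the eigenvalues appearing in Lemma~\ref{noise_bound} (which is stated in the original ordering); and (ii) correctly pairing the $\pm$ roots with $\lambda^{max}$ for the noise block versus $\lambda^{min}$ for the injection block, and checking the positivity used to pull the square inside the $\min$ in the final step.
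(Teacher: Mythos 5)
Your proposal is correct and follows the same route the paper intends: the paper derives the corollary ``immediately from Lemma~\ref{noise_bound} through eigenvalue analysis of $2\times2$ matrices,'' which is exactly your block-diagonalization of $\Sigma_{(n_v,n_\theta)}$ and $\Sigma_{(p,q)}$ into per-bus $2\times2$ blocks followed by substitution of $\lambda^{max}_{\Sigma_{(n_v,n_\theta)}}=\tfrac12\max_i\sigma^i_n$ and $\lambda^{min}_{\Sigma_{(p,q)}}=\tfrac12\min_i\sigma^i_{(p,q)}$ into the lemma. Your bookkeeping of the factors of $\tfrac12$ and the diagonal special case both check out.
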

Corollary \ref{col_noise} intuitively states that the bound on the largest deviation in inverse covariance of voltages increases with the noise covariance, but decreases with the injection covariance. Our learning algorithms depend on relative values in the estimated inverse covariance matrix of voltages. We use Lemma \ref{noise_bound} and Corollary \ref{col_noise} to determine maximum values in $\Delta\Sigma^{-1}_{(v,\theta)}$ and noise covariance $\Sigma_{(n_v,n_\theta)}$ and selection of thresholds to ensure correct learning under noise.
\begin{theorem}\label{noise_est}
In grid $\mathcal{G}$, consider estimated inverse covariance matrix of voltages $\Sigma^{-1}_{(\Tilde{v},\Tilde{\theta})}$, under uncorrelated zero-mean noise of covariance $\Sigma_{(n_v,n_\theta)}$. For the noiseless inverse covariance matrix of voltages $\Sigma^{-1}_{(v,\theta)}$, given by Lemma \ref{inverse_LC_PF}, define $\gamma_1,\gamma_2$ as:
\begin{align}
\gamma_1 &\coloneqq \min_{i \neq j} \{|J_{vv}(i,j)|, J_{vv}(i,j)\neq 0\}\label{gamma1}\\
\gamma_2 &\coloneqq \min_{i \neq j} \{|J_{vv}(i,j)+J_{\theta\theta}(i,j)|, J_{vv}(i,j)+ J_{\theta\theta}(i,j)<0\}\label{gamma2}
\end{align}
1. If $\max_{i,j} |\Delta\Sigma^{-1}_{(v,\theta)}(i,j)| < \gamma_1/2$, then at large sample sizes, Algorithm $1$ with threshold $\tau_1 =\gamma_1/2$ outputs the correct topology, with noise.\\
2. If $\max_{i,j} |\Delta\Sigma^{-1}_{(v,\theta)}(i,j)| < \gamma_2/4$, then Algorithm $2$ with threshold $\tau_2 =\gamma_2/2$ will output the correct topology, with noise, in the large sample limit.\\
3. Let noise be uncorrelated across buses with $\sigma^i_n,\sigma^i_{(p,q)}$ defined in Eqs.~(\ref{col_n_sigma},\ref{col_p_sigma}) respectively. Algorithms $1$ and $2$ will output correctly, if $\max_i\sigma^i_n< \frac{\min(2\gamma_1,\gamma_2)\min_i (\sigma^i_{(p,q)})^2}{8(\lambda^{max}_{H^2_{(g,\beta)}})^2}$.
\end{theorem}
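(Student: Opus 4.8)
The plan is to regard the noisy inverse covariance matrix as the exact one perturbed by $\Delta\Sigma^{-1}_{(v,\theta)}$, bound every entry of that perturbation using Lemma~\ref{noise_bound}/Corollary~\ref{col_noise}, show that the single thresholding step of each algorithm is insensitive to a perturbation of that magnitude, and then let the remaining (purely combinatorial) steps and the structural theorems run verbatim. Throughout I keep in force the same structural hypotheses under which the algorithms are correct without noise (minimum cycle length $>6$ and at least $3$ non-leaf nodes for Algorithm~$1$; minimum cycle length $>3$ for Algorithm~$2$). Set $\epsilon := \max_{i,j}|\Delta\Sigma^{-1}_{(v,\theta)}(i,j)|$; this also bounds $|\Delta J_{vv}(i,j)|$ and $|\Delta J_{\theta\theta}(i,j)|$ individually, hence $|\Delta J_{vv}(i,j)+\Delta J_{\theta\theta}(i,j)|\le 2\epsilon$. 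I would also record that for finite $n$ the empirical $\hat\Sigma^{-1}_{(\Tilde{v},\Tilde{\theta})}$ converges to $\Sigma^{-1}_{(\Tilde{v},\Tilde{\theta})}$, so for $n$ large the extra sampling error is absorbed by any strict slack appearing below; this is the role of the ``large sample'' qualifier.

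For part~1, the only data-dependent step of Algorithm~$1$ is forming $\mathcal{GM}_{hd}$ via the test $|J_{\Tilde{v}\Tilde{v}}(i,j)|>\tau_1$. If $(ij)$ is a true or two-hop edge then $|J_{vv}(i,j)|\ge\gamma_1$ by \eqref{gamma1}, so $|J_{\Tilde{v}\Tilde{v}}(i,j)|\ge\gamma_1-\epsilon>\gamma_1/2=\tau_1$ and the edge is kept; if $i$ and $j$ are three or more hops apart then $J_{vv}(i,j)=0$ by Theorem~\ref{structure_LC_PF}, so $|J_{\Tilde{v}\Tilde{v}}(i,j)|=|\Delta J_{vv}(i,j)|\le\epsilon<\tau_1$ and no spurious edge is inserted. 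Hence $\mathcal{GM}_{hd}$ is recovered exactly, and Theorems~\ref{condind_loopy} and \ref{condind_leaf}, which act only on the adjacency structure of $\mathcal{GM}_{hd}$, return $\mathcal{G}$ exactly; both strict inequalities survive an additional sampling error bounded by the slack $\gamma_1/2-\epsilon>0$, which gives correctness at large $n$. For part~2, Algorithm~$2$ inserts $(ij)$ iff $J_{\Tilde{v}\Tilde{v}}(i,j)+J_{\Tilde{\theta}\Tilde{\theta}}(i,j)<-\tau_2$, and this equals $[J_{vv}(i,j)+J_{\theta\theta}(i,j)]+[\Delta J_{vv}(i,j)+\Delta J_{\theta\theta}(i,j)]$ with the second bracket of magnitude at most $2\epsilon<\gamma_2/2=\tau_2$. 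For a true edge, Theorem~\ref{threshld_loopy} together with \eqref{gamma2} gives $J_{vv}(i,j)+J_{\theta\theta}(i,j)\le-\gamma_2$, so the noisy sum is $<-\gamma_2+\gamma_2/2=-\tau_2$ and the edge is inserted; for a non-edge, Theorem~\ref{threshld_loopy} gives $J_{vv}(i,j)+J_{\theta\theta}(i,j)\ge0$, so the noisy sum is $>-\tau_2$ and nothing is inserted. Thus the exact topology is returned, with the large-sample clause handled exactly as in part~1.

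For part~3, under noise uncorrelated across buses Corollary~\ref{col_noise} gives $\epsilon\le 2(\lambda^{max}_{H^2_{(g,\beta)}})^2\max_i\sigma^i_n/\min_i(\sigma^i_{(p,q)})^2$. Substituting the hypothesis $\max_i\sigma^i_n<\min(2\gamma_1,\gamma_2)\min_i(\sigma^i_{(p,q)})^2/\bigl(8(\lambda^{max}_{H^2_{(g,\beta)}})^2\bigr)$ yields $\epsilon<\min(2\gamma_1,\gamma_2)/4=\min(\gamma_1/2,\gamma_2/4)$, so the hypotheses of parts~1 and 2 are both met and both algorithms recover $\mathcal{G}$. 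I do not expect a substantive obstacle: the delicate points are bookkeeping rather than conceptual — carrying the factor of two that arises from summing the $J_{vv}$ and $J_{\theta\theta}$ blocks in Algorithm~$2$, verifying the arithmetic identity $\min(2\gamma_1,\gamma_2)/8=\min(\gamma_1/2,\gamma_2/4)$, and keeping every perturbation inequality strict so that the vanishing finite-sample error can be absorbed into the slack.
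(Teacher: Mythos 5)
Your proposal is correct and follows essentially the same route as the paper's proof: bound every entry of the noise-induced perturbation by $\epsilon$, show the thresholds $\tau_1=\gamma_1/2$ and $\tau_2=\gamma_2/2$ separate true values from perturbed zeros (respectively perturbed non-negative sums) with strict slack, and obtain part 3 by plugging the bound from Corollary \ref{col_noise} into parts 1 and 2. Your write-up is somewhat more explicit than the paper's (the $2\epsilon$ bookkeeping for the summed blocks, the identity $\min(2\gamma_1,\gamma_2)/4=\min(\gamma_1/2,\gamma_2/4)$, and the absorption of finite-sample error into the slack), but there is no substantive difference in approach.
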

\begin{proof}
1. $J_{vv}(i,j) \neq 0$ for edge $(ij)$ in `hybrid' graphical model $\mathcal{GM}_{hd}$ of grid $\mathcal{G}$, and equals $0$ otherwise. Let $\max_{i,j} |\Delta\Sigma^{-1}_{(v,\theta)}(i,j)| < \gamma_1/2$. Using Eqs.~(\ref{deviatesigma1},\ref{deviateinverse}), we have
\begin{align}
 &|J_{\Tilde{v}\Tilde{v}}(i,j)|\leq \max_{i,j} |\Delta\Sigma^{-1}_{(v,\theta)}(i,j)|< \gamma_1/2 ~\forall (ij)\notin \mathcal{GM}_{hd}\nonumber\\
 &|J_{\Tilde{v}\Tilde{v}}(i,j)|\geq\gamma_1-\max_{i,j} |\Delta\Sigma^{-1}_{(v,\theta)}(i,j)|> \gamma_1/2 ~\forall (ij) \in \mathcal{GM}_{hd}.\nonumber
\end{align}
Thus, Algorithm $1$ with threshold $\tau_1 =\gamma_1/2$, under noise, correctly identifies $\mathcal{GM}_{hd}$ and consequently grid $\mathcal{G}$, in the large sample limit.\\
2. $J_{vv}(i,j) +J_{\theta\theta}(i,j) < 0$ for edge $(ij)$ in grid $\mathcal{G}$, and takes a non-negative value otherwise. Using $\max_{i,j} |\Delta\Sigma^{-1}_{(v,\theta)}(i,j)| < \gamma_2/4$, with Eqs.~(\ref{deviatesigma1},\ref{deviateinverse}), we have
\begin{align}
 &J_{\Tilde{v}\Tilde{v}}(i,j)+J_{\Tilde{\theta}\Tilde{\theta}}(i,j)< -\gamma_2+2\gamma_2/4 = -\gamma_2/2 ~~\forall (ij)\in \mathcal{G}\nonumber\\
 &J_{\Tilde{v}\Tilde{v}}(i,j)+J_{\Tilde{\theta}\Tilde{\theta}}(i,j)> 0-2\gamma_2/4 = -\gamma_2/2~~\text{otherwise}.\nonumber
\end{align}
Thus, in the large sample limit, threshold $\tau_2 =\gamma_2/2$ in Algorithm $2$ enables correct estimation of grid $\mathcal{G}$, under noise.
3. This follows directly from statements 1, 2 and Corollary \ref{col_noise}.
\end{proof}}
{\subsection{Effect of Injection Correlation}\label{sec:corr}
The theorems guaranteeing the asymptotic correctness of our algorithms rely on nodal injection fluctuations being uncorrelated. We now analyze their performance under correlated injections. Consider the setting where the correlated inverse covariance matrix $\Sigma^{-1}_{(p,q)^c} = \Sigma^{-1}_{(p,q)} + \Delta\Sigma^{-1}_{(p,q)}$. Here $\Sigma^{-1}_{(p,q)}$ is the diagonal uncorrelated inverse covariance matrix. $\Delta\Sigma^{-1}_{(p,q)}$ is the difference matrix that arises from dependence among injections fluctuations. Following Eq.~\ref{covar_varepsilon}, the inverse covariance of voltages under $\Sigma^{-1}_{(p,q)^c}$ is given by
\begin{align}
 \Sigma^{-1}_{(v,\theta)^c} = \Sigma^{-1}_{(v,\theta)} + H_{(g,\beta)}\Delta\Sigma^{-1}_{(p,q)}H_{(g,\beta)}.\label{corr_v}
\end{align}
Here $\Sigma^{-1}_{(v,\theta)}$ is the inverse covariance of voltages under uncorrelated $\Sigma^{-1}_{(p,q)}$. Its algebraic form is given by Lemma \ref{inverse_LC_PF}, with related terms $\gamma_1,\gamma_2$ defined in Eqs.~\ref{gamma1},\ref{gamma2}. The following result holds.
\begin{lemma}\label{corr_est}
In grid $\mathcal{G}$, consider inverse covariance matrix of voltages $\Sigma^{-1}_{(v,\theta)^c}$, under correlated injection covariance $\Sigma^{-1}_{(p,q)^c}$. \\
1. If $\max_{i,j} |H_{(g,\beta)}\Delta\Sigma^{-1}_{(p,q)}H_{(g,\beta)}| < \gamma_1/2$, then at large sample sizes, Algorithm $1$ with threshold $\tau_1 =\gamma_1/2$ outputs the correct topology.\\
2. If $\max_{i,j} |H_{(g,\beta)}\Delta\Sigma^{-1}_{(p,q)}H_{(g,\beta)}| < \gamma_2/4$, then at large sample sizes, Algorithm $2$ with threshold $\tau_2 =\gamma_2/2$ will output the correct topology.
\end{lemma}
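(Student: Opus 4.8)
\textbf{Proof proposal for Lemma \ref{corr_est}.}

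The plan is to mirror exactly the structure of the proof of Theorem \ref{noise_est}, with the role of the noise-induced deviation $\Delta\Sigma^{-1}_{(v,\theta)}$ now played by the correlation-induced deviation. From Eq.~\ref{corr_v}, the perturbation of the voltage inverse covariance caused by injection correlation is precisely $H_{(g,\beta)}\Delta\Sigma^{-1}_{(p,q)}H_{(g,\beta)}$; write this matrix as $\Delta J^c$ with blocks $\Delta J^c_{vv}, \Delta J^c_{v\theta}, \Delta J^c_{\theta v}, \Delta J^c_{\theta\theta}$. The key observation is that the correctness arguments for Algorithms $1$ and $2$ never used any property of the noise deviation other than the uniform entrywise bound $\max_{i,j}|\Delta\Sigma^{-1}_{(v,\theta)}(i,j)|$; hence the same arguments go through verbatim with $\Delta\Sigma^{-1}_{(v,\theta)}$ replaced by $\Delta J^c$.

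For part 1, I would recall that $J_{vv}(i,j)\neq 0$ exactly for edges $(ij)$ of the hybrid graphical model $\mathcal{GM}_{hd}$ (by Theorem \ref{structure_LC_PF} and the non-degeneracy remark), and $\gamma_1 = \min_{i\neq j}\{|J_{vv}(i,j)| : J_{vv}(i,j)\neq 0\}$ is the smallest magnitude of a genuine entry. Under the hypothesis $\max_{i,j}|\Delta J^c(i,j)| < \gamma_1/2$, the perturbed entry $J^c_{vv}(i,j) = J_{vv}(i,j) + \Delta J^c_{vv}(i,j)$ satisfies $|J^c_{vv}(i,j)| < \gamma_1/2$ whenever $(ij)\notin\mathcal{GM}_{hd}$ (since the true entry is $0$) and $|J^c_{vv}(i,j)| > \gamma_1 - \gamma_1/2 = \gamma_1/2$ whenever $(ij)\in\mathcal{GM}_{hd}$. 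Therefore Step \ref{alg_1_step1} of Algorithm $1$ with $\tau_1 = \gamma_1/2$ recovers $\mathcal{GM}_{hd}$ exactly in the large-sample limit, and the rest of Algorithm $1$ (the separation tests of Theorems \ref{condind_loopy} and \ref{condind_leaf}, which are purely combinatorial once $\mathcal{GM}_{hd}$ is known) then returns $\mathcal{G}$. For part 2, I would use that $\gamma_2 = \min_{i\neq j}\{|J_{vv}(i,j)+J_{\theta\theta}(i,j)| : J_{vv}(i,j)+J_{\theta\theta}(i,j)<0\}$ and that, by Theorem \ref{threshld_loopy}, $J_{vv}(i,j)+J_{\theta\theta}(i,j)<0$ iff $(ij)\in\mathcal{G}$ and is $\geq 0$ otherwise. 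The perturbation in the quantity $J_{vv}(i,j)+J_{\theta\theta}(i,j)$ is $\Delta J^c_{vv}(i,j)+\Delta J^c_{\theta\theta}(i,j)$, whose magnitude is at most $2\max_{i,j}|\Delta J^c(i,j)| < 2\cdot\gamma_2/4 = \gamma_2/2$ under the hypothesis. Hence for a true edge the perturbed sum is $< -\gamma_2 + \gamma_2/2 = -\gamma_2/2$ and for a non-edge it is $> 0 - \gamma_2/2 = -\gamma_2/2$, so thresholding at $-\tau_2 = -\gamma_2/2$ in Step \ref{alg_2_step1} of Algorithm $2$ separates the two cases and returns $\mathcal{G}$ exactly in the large-sample limit.

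The only subtlety — and the step I expect to require the most care — is the accounting of the factor-of-two difference between the two parts: Algorithm $1$ thresholds a single entry $J_{vv}(i,j)$, so a bound $\gamma_1/2$ on each entry of $\Delta J^c$ suffices, whereas Algorithm $2$ thresholds the sum $J_{vv}(i,j)+J_{\theta\theta}(i,j)$ of two entries, so one needs the stricter $\gamma_2/4$ per-entry bound to control the combined deviation by $\gamma_2/2$. This is exactly the same bookkeeping as in Theorem \ref{noise_est}, so no new estimate is needed; one simply substitutes $H_{(g,\beta)}\Delta\Sigma^{-1}_{(p,q)}H_{(g,\beta)}$ for the noise deviation throughout. (If desired, one could additionally bound $\max_{i,j}|H_{(g,\beta)}\Delta\Sigma^{-1}_{(p,q)}H_{(g,\beta)}(i,j)|$ in terms of $\lambda^{max}_{H^2_{(g,\beta)}}$ and $\|\Delta\Sigma^{-1}_{(p,q)}\|$ via the same eigenvalue manipulation used in Lemma \ref{noise_bound}, giving a condition directly on the injection cross-correlations, but that is not needed for the statement as written.)
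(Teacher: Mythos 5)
Your proposal is correct and follows essentially the same route the paper intends: it uses Eq.~\ref{corr_v} to identify the correlation-induced perturbation of $\Sigma^{-1}_{(v,\theta)}$ as $H_{(g,\beta)}\Delta\Sigma^{-1}_{(p,q)}H_{(g,\beta)}$ and then repeats the entrywise separation argument of Theorem \ref{noise_est} (per-entry bound $\gamma_1/2$ for the single entry thresholded in Algorithm $1$, per-entry bound $\gamma_2/4$ so the two-entry sum in Algorithm $2$ deviates by less than $\gamma_2/2$), which is exactly the omitted-for-space argument the authors describe. No gaps.
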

The proof uses Eq.~\ref{corr_v} to bound the maximum deviation in entries in $\Sigma^{-1}_{(v,\theta)^c}$, due to injection correlation. The remaining analysis for correctness is similar to the proof of Theorem \ref{noise_est}, and is omitted for space reasons.
Observe that the maximum injection correlation for which the algorithms are consistent, depends on the system parameters (topology and impedances). One can also consider a different model where the correlated injection covariance (not its inverse) is given as $\Sigma_{(p,q)^c}+ \Delta\Sigma_{(p,q)}$. A perturbation analysis as described in Eq.~\ref{deviateinverse} can be used in that setting to arrive at corresponding bounds.} In the next section, we discuss the accuracy of our methods and its dependence on number of voltage samples considered.\\
{\textbf{Note:} The thresholds determined in Theorem \ref{noise_est} and Lemma \ref{corr_est} operate in the large sample limit. For finite number of samples, thresholds need to be tuned using historical data to ensure correct estimation. Bounds for correct change detection, and selection of threshold $\tau_3$ in Algorithm $3$ can be analyzed similarly. We omit it due to space limitations.}
\section{Numerical Simulations}
\label{sec:simulations}
While the theoretical analysis for our designed algorithms are based on linearized power flow models, we demonstrate the practicality of our algorithms using non-linear AC samples for different test cases, {in the presence of noise.} Nodal injection fluctuations centered around the mean injection are modelled by uncorrelated zero-mean Gaussian random variables with covariance taken as $1e^{-2}$, relative to the base injection. The injection samples are sent to non-linear Matpower \cite{matpower} solver to generate i.i.d. samples of nodal voltages. {We also corrupt the nodal voltage measurements with Gaussian noise of different variances (measured relative to the nodal variance).} These measurements are used to estimate the inverse covariance matrix $\Sigma^{-1}_{(\Tilde{v},\Tilde{\theta})}$ of voltages through Graphical Lasso as described in previous sections. The estimated $\Sigma^{-1}_{(\Tilde{v},\Tilde{\theta})}$ is then input to Algorithms $1, 2$ and $3$. The thresholds $\tau_1,\tau_2$ and $\tau_3$ for the algorithms are empirical tuned to values which give minimal estimation errors in the large sample limit. They are then fixed while the measurement sample sizes are varied.
\subsection{Topology Estimation}
We first discuss results for Algorithm $1$ (neighborhood search) and Algorithm $2$ (sign rule) in learning the operational network. We consider a modified case with $56$ nodes \cite{bolognani2013identification} derived from the IEEE $123$ test feeder \cite{kersting2001radial}. We consider loopy extensions of this system with differing minimum cycle lengths.
\begin{figure}[htb]
\centering\hfill
\subfigure[]{\includegraphics[width=0.07\textwidth]{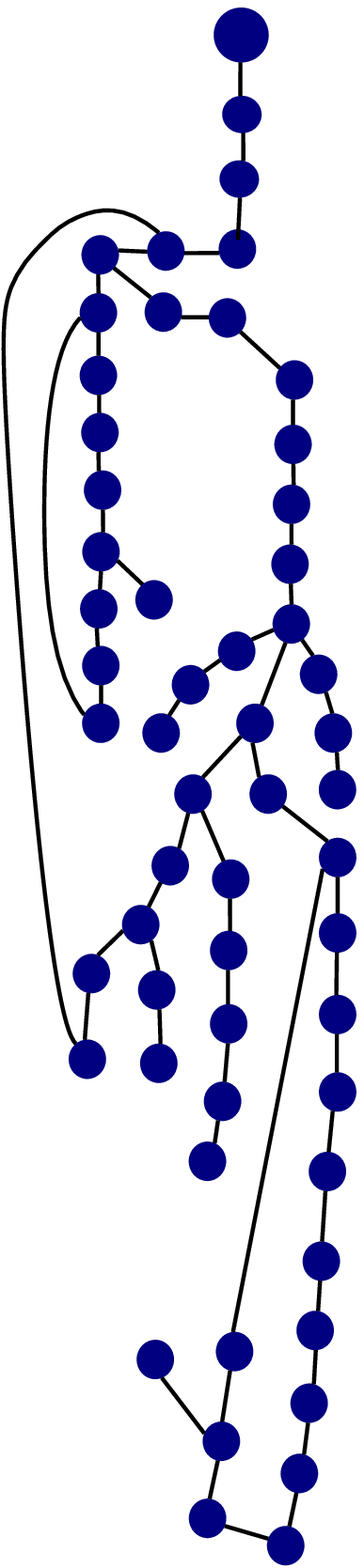}\label{fig:case56_7}}~~
\subfigure[]{\includegraphics[width=0.42\textwidth]{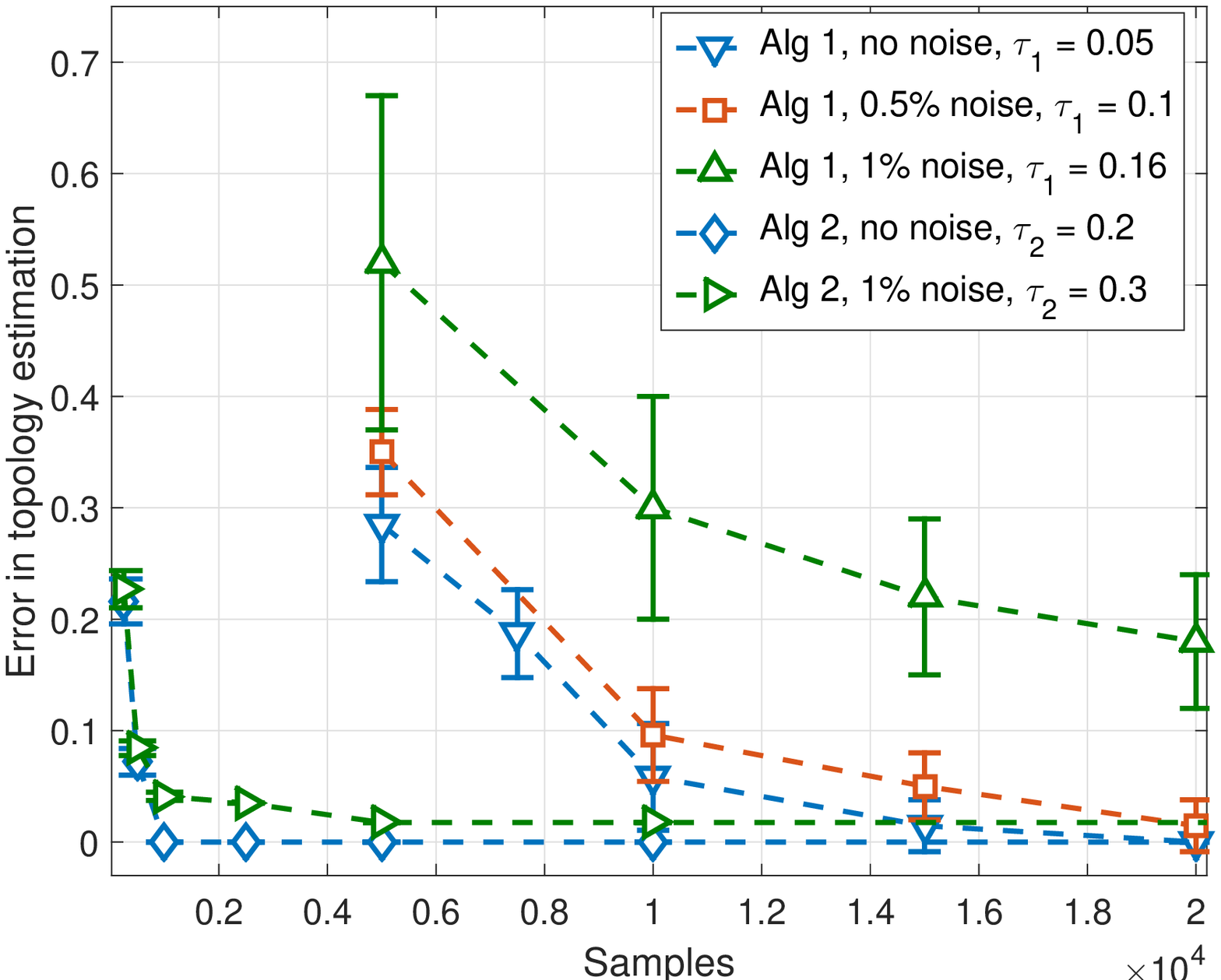}\label{fig:errors_56_7}}
\vspace{-.25cm}
\caption{(a) Modified $56$ node test case \cite{bolognani2013identification} with cycle length $7$ (b) Average errors in Algorithm $1,2$ for topology learning with number of {noisy} voltage samples over $10$ independent runs.}\label{fig:57_7}
\vspace{-3mm}
\end{figure}
First consider the system with $3$ loops with minimum cycle length of $7$ as shown in Fig.~\ref{fig:case56_7}. We apply Algorithm 1 and Algorithm 2 for topology inference, and demonstrate the effect of number of samples on estimation accuracy in 10 independent realizations in Fig.~\ref{fig:errors_56_7}. Error in each run is computed as the ratio of the sum of number of false links identified and true links missed, to the total number of links in the underlying network. Observe that both algorithms give zero errors in the large sample limit, {though the performance deteriorates with higher noise.} This is consistent with Theorems \ref{condind_loopy},~\ref{condind_leaf},~\ref{threshld_loopy} which prove exact recovery of either algorithm for cycle lengths greater than $6$. However, in the limited sample regime, the sign rule based algorithm clearly outperforms the neighborhood search approach {Among others, this is due to the fact that inaccurate estimation (presence or absence) of two hop neighbors affects the discovery of true edges in Algorithm $1$. On the other hand, Algorithm $2$ uses a threshold to determine the true grid edges directly and can hence sustain greater inaccuracy in entries corresponding to two hop neighbors.}

Next, we consider the case in Fig. \ref{fig:case56_4} with cycle length of $4$. The errors in topology estimation for Algorithms $1,2$ are shown in Fig.~\ref{fig:errors_56_4}. {While Algorithm $2$ (sign rules) gives exact recovery at high samples for noiseless samples, Algorithm $1$ fails to do so as the minimum cycle length here is lower than $7$. Note that Algorithm $1$ is able to give exact recovery for $1\%$ noise case, but saturates to a non-zero error for the $2\%$ noise case. The performance under noise depends on the values in the system matrices and injection covariances as presented in the previous section.}
\begin{figure}[htb]
\centering\hfill
\subfigure[]{\includegraphics[width=0.07\textwidth]{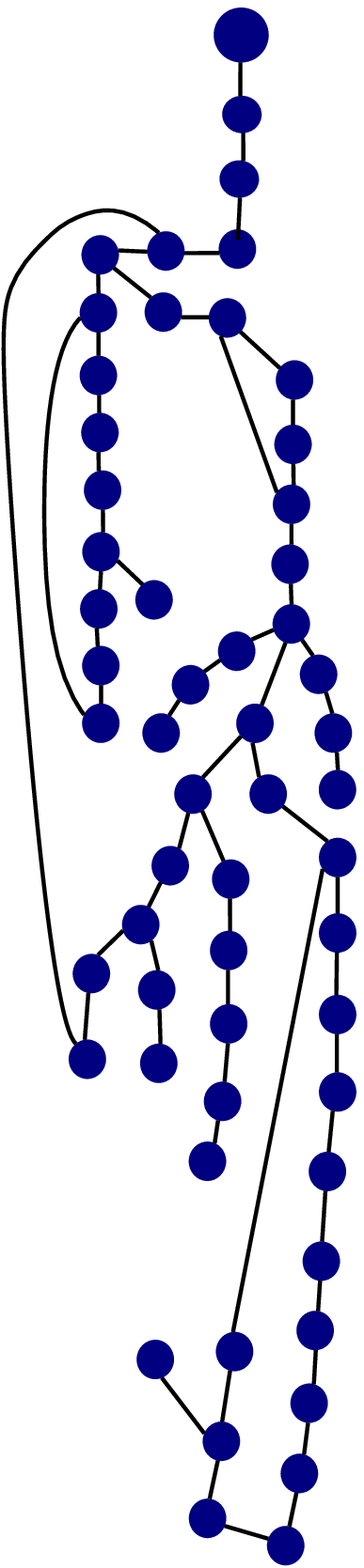}\label{fig:case56_4}}~~
\subfigure[]{\includegraphics[width=0.42\textwidth]{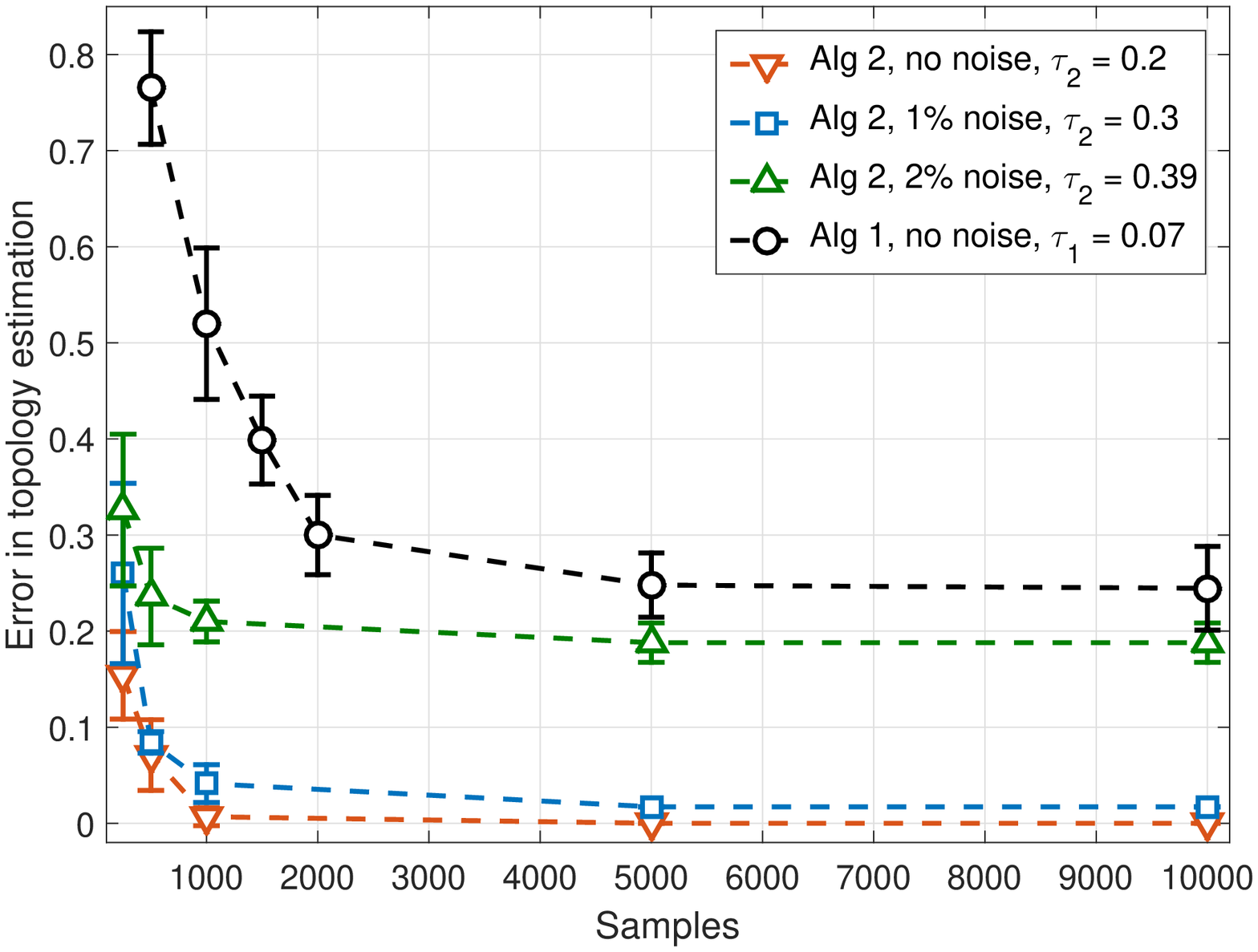}\label{fig:errors_56_4}}
\vspace{-.25cm}
\caption{(a) Modified $56$ node test case \cite{bolognani2013identification} with cycle length $4$ (b) Average errors in Algorithm $1,2$ for topology learning with number of {noisy} voltage samples over $10$ independent runs.}\label{fig:57_4}
\vspace{-3mm}
\end{figure}
Finally, we consider a modified $33$ node system \cite{matpower} which is made loopy with $3$ node cycles as shown in Fig.~\ref{fig:case33_3}. This case violates the minimum cycle length necessary for exact theoretical recovery in both algorithms. The accuracy at high samples indeed does not decay to zero, even in the noiseless case, for both of them, as shown in Fig.~\ref{fig:errors_33_3} for 10 independent realizations. These simulation results thus validate the theoretical consistency and restrictions for exact topology learning for loopy power grids.
\begin{figure}[htb]
\centering\hfill
\subfigure[]{\includegraphics[width=0.07\textwidth]{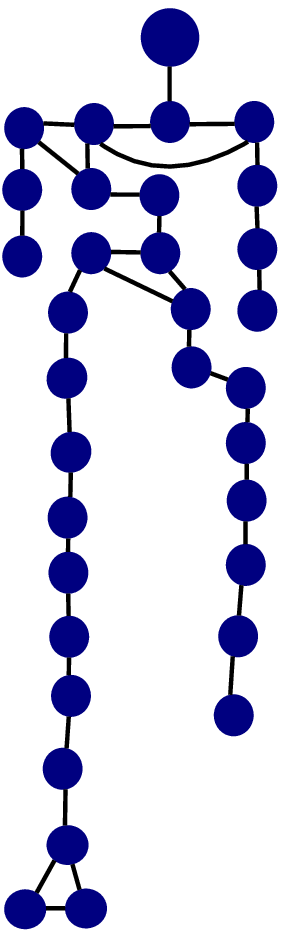}\label{fig:case33_3}}~~
\subfigure[]{\includegraphics[width=0.42\textwidth]{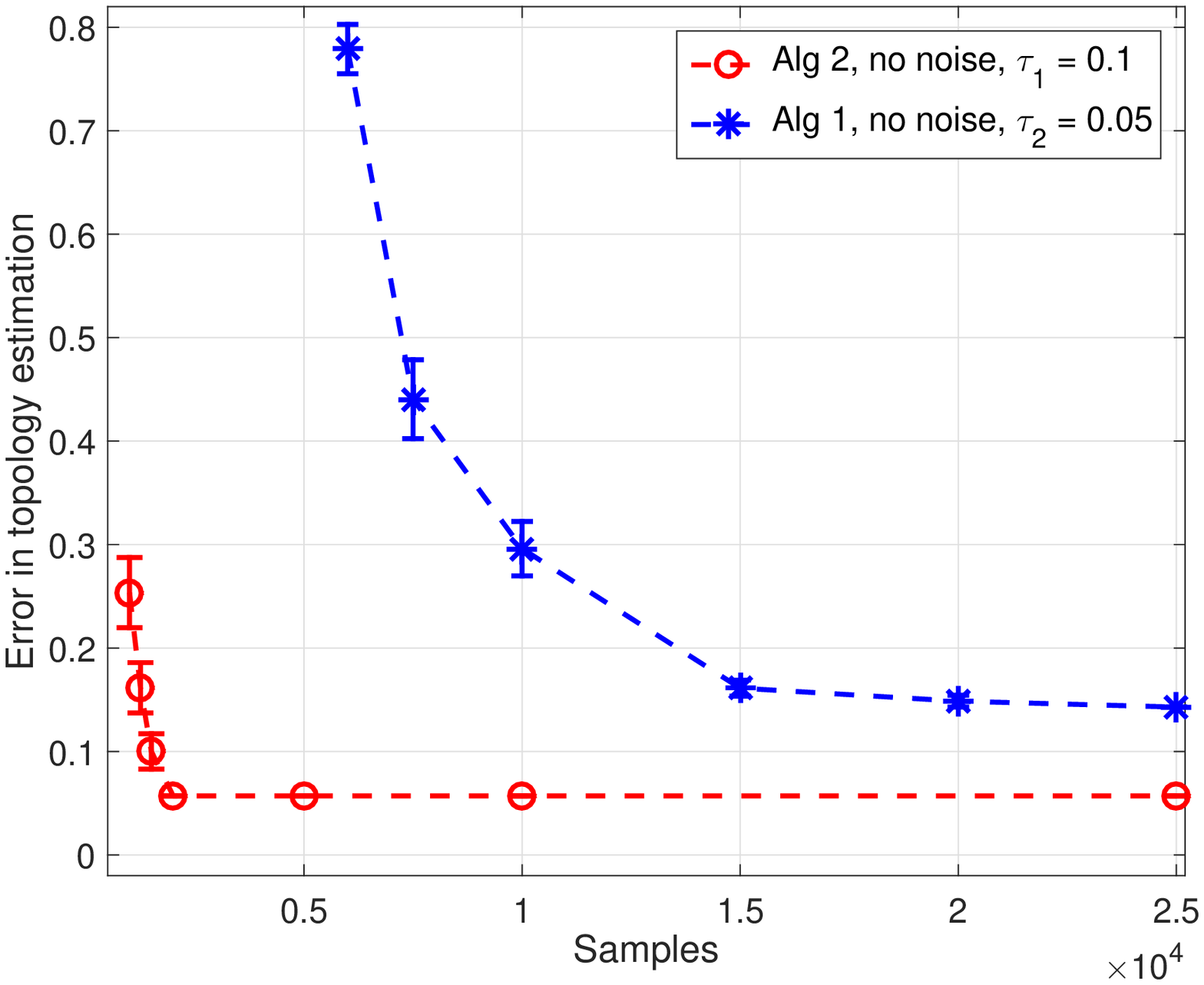}\label{fig:errors_33_3}}
\vspace{-.25cm}
\caption{(a) modified $33$ bus networks with cycle length $3$ (b) Average errors in Algorithm $1,2$ for topology learning with number of {noiseless} voltage samples over $10$ independent runs.}\label{fig:33_3}
\vspace{-3mm}
\end{figure}

{\textbf{Effect of Threshold Selection and Injection Correlation:} The results for Algorithms $1,2$ include performance under optimized thresholds, with voltage samples generated under uncorrelated injections (Assumption $1$). We now analyze the sensitivity of the performance to both threshold selection and injection correlation. We consider the grid in Fig.~\ref{fig:case56_7} with cycle length $7$, for which Algorithms $1$ and $2$ theoretically output the correct topology in the asymptotic limit. To analyze the sensitivity to thresholds, we fix the number of measurements to large values under which recovery is exact (see Fig.~\ref{fig:errors_56_7}). We then change thresholds $\tau_1, \tau_2$ in Algorithms $1$ and $2$ respectively around their optimal values, and report the estimation errors over $10$ independent simulations. It is clear from Fig.~\ref{fig:thres_effect} that both algorithms give exact reconstruction for varying the thresholds by $20\%$ or more, around the selected values, though the errors increase if the variation is much larger. Note that Algorithm $2$ is more robust to threshold variation for the test-case considered.}
\begin{figure}[htb]
\centering
\subfigure[]{\includegraphics[width=0.42\textwidth,height=0.23\textwidth]{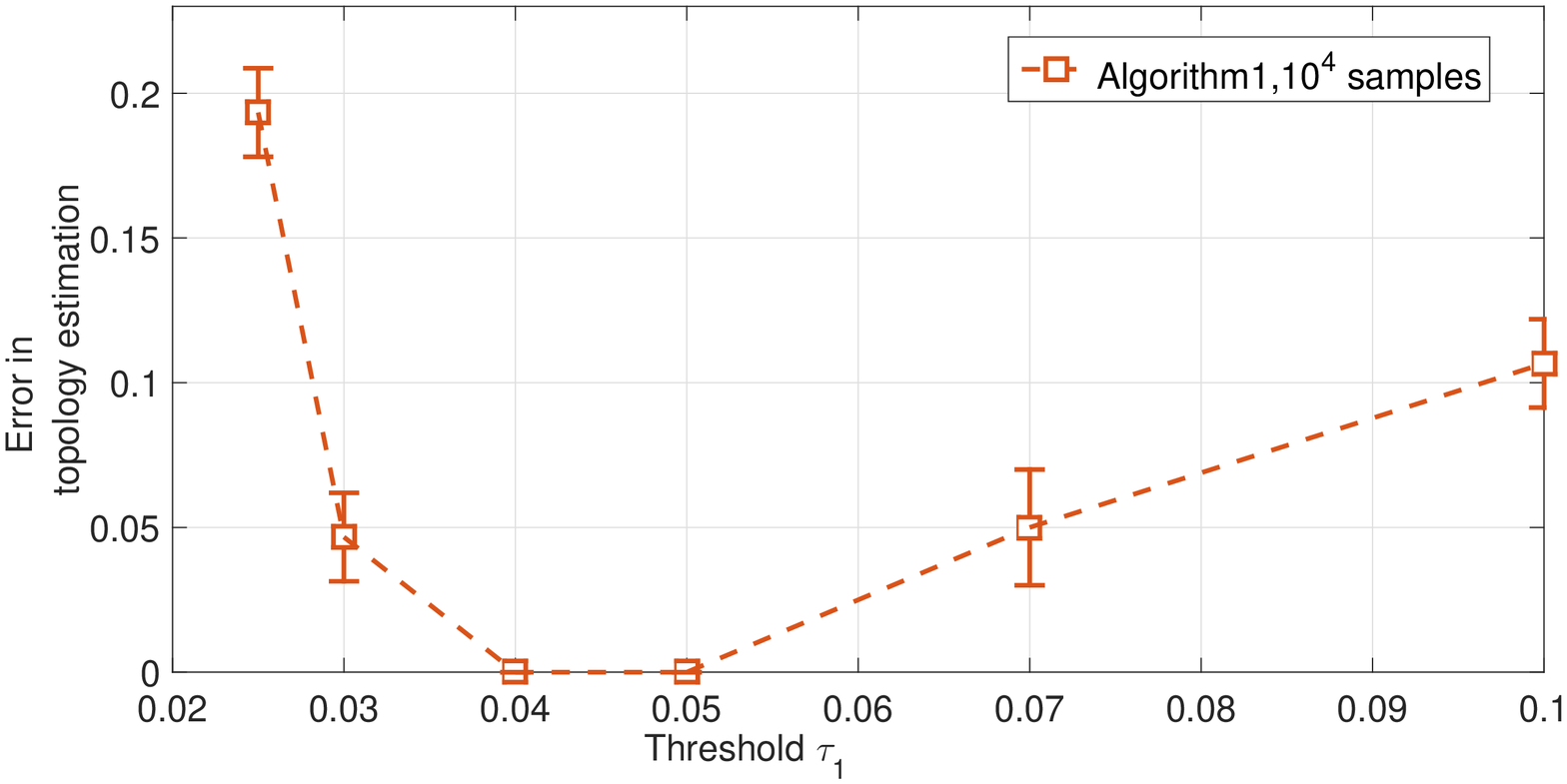}\label{fig:alg1_thres}}\hfill
\vspace{-.25cm}\subfigure[]{\includegraphics[width=0.42\textwidth,height=0.23\textwidth]{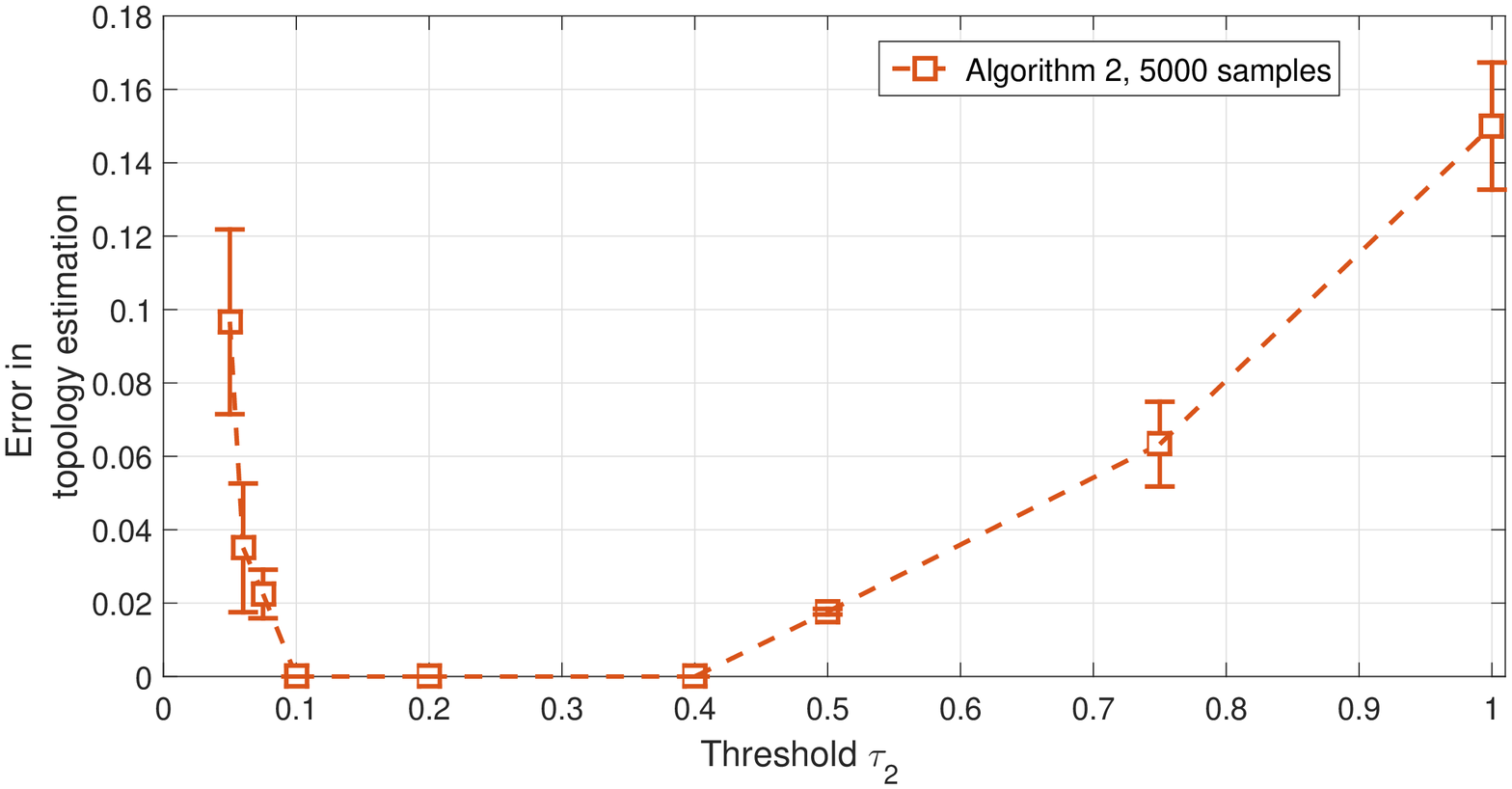}\label{fig:alg2_thres}}
\caption{Sensitivity to threshold selection for Algorithms (a) $1$ and (b) $2$ for grid in Fig.~\ref{fig:case56_7} for large sample sizes, over $10$ independent runs.}
\label{fig:thres_effect}
\vspace{-3mm}
\end{figure}

{Next we consider the performance of the algorithms under correlated injections for the test case in Fig.~\ref{fig:case56_7}. To the inverse of the injection covariance matrix, we add off-diagonal terms (see Section \ref{sec:corr}) of value $\epsilon$ relative to the diagonal entries. We then use it to generate injection and corresponding Matpower voltage samples. Fig.~\ref{fig:corr_effect} includes the performance of Algorithms $1,2$ for different samples sizes, under increasing values of the off-diagonal terms. Observe that while the performance deteriorates with the increased injection correlation, for $10\%$ relative value of off-diagonal terms, the relative errors are within $.1$ for both algorithms.}
\begin{figure}[htb]
\centering
\subfigure[]{\includegraphics[width=0.44\textwidth,height=0.23\textwidth]{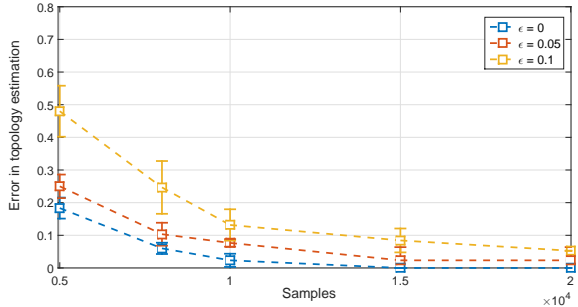}\label{fig:alg1_corr}}\hfill
\vspace{-.25cm}\subfigure[]{\includegraphics[width=0.44\textwidth,height=0.23\textwidth]{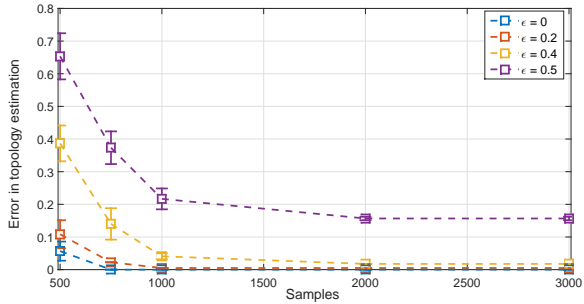}\label{fig:alg2_corr1}}
\caption{Sensitivity to injection correlation for Algorithms $1$ and $2$ for grid in Fig.~\ref{fig:case56_7} for different sample sizes, over $10$ independent runs. $\epsilon$ represents the relative value of off-diagonal elements in inverse injection covariance matrix to the diagonal covariance terms.}
\label{fig:corr_effect}
\vspace{-3mm}
\end{figure}

\subsection{Change Detection}
\begin{figure}[htb]
\centering\hfill
\subfigure[]{\includegraphics[width=0.09\textwidth]{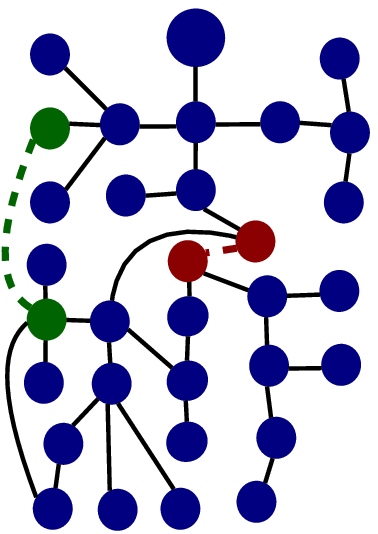}\label{fig:case33_remove}}~~
\subfigure[]{\includegraphics[width=0.38\textwidth]{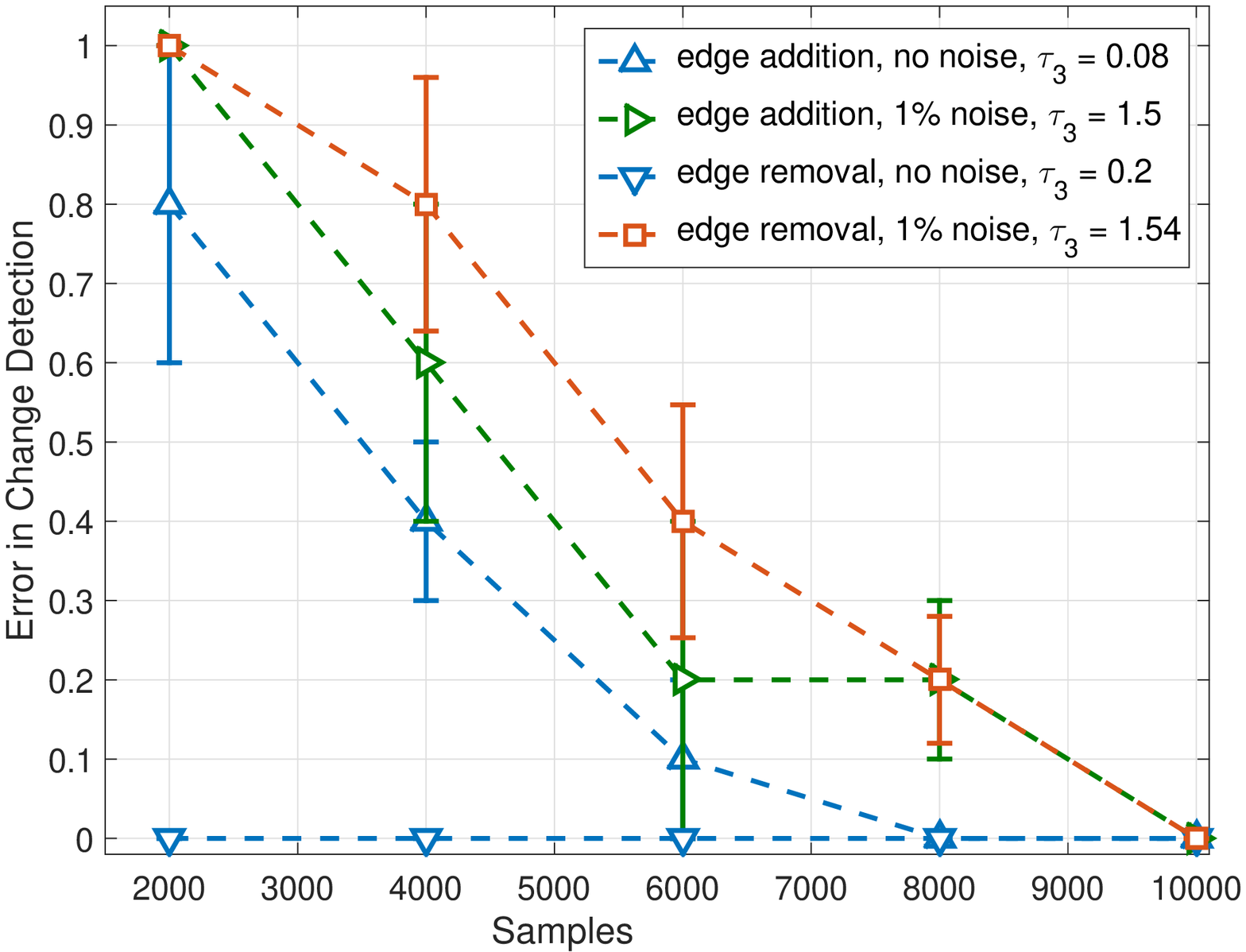}\label{fig:errors_removal}}
\vspace{-.25cm}
\caption{(a) Modified $33$ bus network with edge $(7,20)$ added (colored green) and edge $(5,25)$ removed (colored red) (b) Accuracy of Algorithm $3$ in identifying line addition and line removal with number of voltage measurement samples{, with noise.}}
\label{fig:errors_change}
\vspace{-3mm}
\end{figure}
We now consider Algorithm $3$ for topology change detection and present simulation results for a modified 33 bus distribution system \cite{matpower} with loops, shown in Fig.~\ref{fig:case33_remove}. We consider the cases where an edge between buses $8$ and $21$ is added, and the edge $(6,26)$ is removed. As before, voltage samples generated by Matpower {and corrupted with noise}, are used to estimate the voltage inverse covariance for change detection for edge addition and removal. We demonstrate in Figure~\ref{fig:errors_removal}, the accuracy of detecting the correct edges for various sample sizes from 10 realizations. In each simulation run, an error of $1$ is recorded if the line addition or removal goes undetected or a wrong line is identified, while error equals $0$ if the correct line is identified. Clearly in the large sample limit, the change is detected for every realization, though the performance under noise at low samples is worse than the performance without noise.

\section{Conclusion}
\label{sec:conclusion}
In this paper, we discuss theoretical aspects of topology learning and change detection for general distribution grids that may have cycles. We present two learning algorithms based on nodal voltage graphical models that are able to estimate the grid topology under non-restrictive topological conditions. Crucially our theoretical methodology does not depend on the knowledge of line impedances and injections statistics, and further does not assume approximate probabilistic models or parameter restrictions. We extend the analysis to the problem to identification in the presence of noise, and correlated injections. Finally we present a topology change detection algorithms for grids using graphical models. Simulation results on non-linear AC power flow samples demonstrate the applicability of this work to realistic grid samples.

This work, while generalizing several approaches, opens multiple directions of future work. In particular, the work can be extended to the case of three phase power flow models of the type discussed in \cite{dekathreephase,lowlinear}. We are also interested in exploring topology change detection to the case of multiple line failure estimation, and with missing nodal measurements.
\section*{Acknowledgement}
The authors D. Deka, and M. Chertkov acknowledge the support from the Department of Energy through the Grid Modernization Lab Consortium, and the Center for Non Linear Studies (CNLS) at Los Alamos National Laboratory. The authors S. Talukdar, and M. V. Salapaka acknowledge the support of ARPA-E under projects titled `A Robust Distributed Framework for Flexible Power Grids’ via grant no. DE-AR000071, and `Rapidly Viable Sustained Grid' via grant no. DE-AR0001016.
\bibliography{sigproc,FIDVR,SmartGrid,voltage,trees}
\appendix
\subsection{Proof of Theorem \ref{condind_loopy}}\label{proof_condind}
We first prove the \textbf{if} part. Let edge $(ij)$ exists in $\cal G$ between non-leaf nodes $i,j$. There exists nodes $k$ and $l$ {separated by three hops} that are neighbors of $i$ and $j$ respectively in $\cal G$ with path $k -i -j -l$ of length $3$. Any other path between $k, l$ in $\cal G$ must be longer than $3$ hops as minimum cycle length is $6$. Using Theorem \ref{structure_LC_PF}, edges $k-j,l-j, k-i,l-i \in \mathcal{GM}_{hd}$. Thus $k,l$ are two-hop neighbors in $\mathcal{GM}_{hd}$.

Next, we prove the \textbf{only if} part by contradiction. Suppose $(ij)$ is not an edge in $\mathcal{G}$.
Then, neighbors $i,j$ in $\mathcal{GM}_{hd}$ are two-hop neighbors in $\cal G$. As minimum cycle length is $6$, $i$ and $j$ can have exactly one common neighbor in $\cal G$, say node $c$. Now paths $k-i-l$ and $k-j-l$ in $\mathcal{GM}_{hd}$ exist, thus $k,l$ must be one or two-hop neighbors of both $i$ and $j$ in $\cal G$. As $k-l$ doesn't exist in $\mathcal{GM}_{hd}$, $k,l$ are separated by more than two hops in $\mathcal{G}$ and hence not both connected to $c$. Hence, cycles $i-r_1-(k/l)-r_2-j-c-i$ or $i-r_1-k-r_2-j-(l=c)-i$ must exist in $\mathcal{G}$ {with distinct nodes $r_1, r_2$ in the graph}. However they form cycles of length $6$ or less, hence it violates the minimum cycle length condition. Finally, we show that $i,j$ must be non leaf nodes in $\mathcal{G}$. For neighbors $i,j$ in $\mathcal{GM}_{hd}$, first consider the case where node $i$ (without loss of generality) is a leaf node in $\cal G$. All neighbors of $i$ in $\mathcal{GM}_{hd}$ are either its parent or neighbors of its parent in $\mathcal{G}$ (separated by two hops). Thus all neighbors $k,l$ of $i$ in $\mathcal{GM}_{hd}$ also have an edge $k-l$ in $\mathcal{GM}_{hd}$ and are not separated by two hops. This contradicts the assumption that $(kl) \notin \mathcal{GM}_{hd}$. Thus $i,j$ must be neighbors and non leaf nodes in $\mathcal{G}$.
\subsection{Proof of Theorem \ref{condind_leaf}}\label{proof_condind_leaf}
We prove the \textbf{if} part using contradiction. Let the true and only neighbor of leaf node $j$ in $\mathcal G$ be node $k\neq i$. Since, $(ij) \in {\cal GM}_{hd}$, it follows that, $i,j$ are two hops away in $\mathcal G$. As there are at least $3$ non-leaf nodes and cycle length is greater than $6$, there exists some node $r$ which is not a leaf node in $\mathcal{G}$ in one of two configuration in $\mathcal G$:\\
(a) $r$ is a neighbor of $i$, two hops away from $k$, and three hops away from $j$.\\
(b) $r$ is a neighbor of $k$, and two hops away from $i$ and $j$.\\
Note that in configuration (a), $r$ is a neighbor of $i$ in $\mathcal G$, but not a neighbor of $j$ in $\mathcal{GM}_{hd}$. In configuration (b), $r$ is a neighbor of $j$ in $\mathcal{GM}_{hd}$, but not $i$'s neighbor in $\mathcal{G}$. This, thus, contradicts the statement of the sets of non-leaf neighbors of $i$ in ${\cal G}$ and $j$ in $\mathcal{GM}_{hd}$ being the same.

For the \textbf{only if} part, consider leaf node $j$ with parent node $i$ in $\mathcal{G}$. As two hop neighbors in $\mathcal{G}$ are neighbors in $\mathcal{GM}_{hd}$, every non-leaf node that is a neighbor of $i$ in $\mathcal{G}$ is a neighbor of $j$ in $\mathcal{GM}_{hd}$ and vice versa.

\subsection{Proof of Theorem \ref{threshld_loopy}}\label{proof_threshld_loopy}
Consider nodes $i,j$ in grid $\mathcal{G}$. Using the expression in Lemma \ref{inverse_LC_PF}, we have
\squeezeup
\begin{align}
 J_{vv} +J_{\theta\theta} &=~\text{\small $H_{g}D^{-1}(\Sigma_{qq}+\Sigma_{pp})H_{g}$}\text{\small$+H_{\beta}D^{-1}(\Sigma_{qq}+\Sigma_{pp})H_{\beta}$}\nonumber
\end{align}
where $D^{-1}(\Sigma_{qq}+\Sigma_{pp})$ is a diagonal matrix with positive entries. Note the structure of $H_g$ and $H_\beta$ from (\ref{Laplacian}). For the \textbf{if} part, consider the case where $i,j$ are neighbors with edge $(ij) \in \mathcal{G}$. As minimum cycle length is greater than three, there are no common neighbors of $i,j$. Hence we have
\begin{align}
 &J_{vv}(i,j) +J_{\theta\theta}(i,j) =\nonumber\\
 &-D^{-1}(i,i)(\Sigma_{qq}(i,i)+\Sigma_{pp}(i,i))(H_g(i,i)g_{ij}+ H_\beta(i,i)\beta_{ij})\nonumber\\
 &-D^{-1}(j,j)(\Sigma_{qq}(j,j)+\Sigma_{pp}(j,j))(H_g(j,j)g_{ij}+ H_\beta(j,j)\beta_{ij})\nonumber\\
 &< 0 \label{neighbor}
\end{align}
To prove the \textbf{only if} part, consider the case where $i,j$ are not neighbors. It follows from the proof of Theorem \ref{structure_LC_PF} that if $i,j$ are three or more hops away, then $J_{vv}(i,j)=0$ and $J_{\theta\theta}(i,j)=0$. Finally, consider the case where $i,j$ are two hop neighbors with common neighbor set $S$. We have
\begin{align}
 J_{vv}(i,j) +J_{\theta\theta}(i,j) &=\sum_{k \in S}D^{-1}(k,k)(\Sigma_{qq}(k,k)+\nonumber\\
 &\Sigma_{pp}(k,k))(g_{ik}g_{kj}+ \beta_{ik}\beta_{kj})
 > 0. \label{twohopneighbor}
\end{align}
Hence proved that the statement holds only if $i,j$ are neighbors.
\end{document}